\documentclass{article}
\usepackage[left=2.5cm, right=1.5cm, top=3cm, bottom=2cm]{geometry}
\usepackage[T1]{fontenc}
\usepackage[utf8]{inputenc}
\usepackage[english]{babel}
\usepackage{csquotes} % Recommended by biblatex
\usepackage{amsmath}
\usepackage{amsthm}
\usepackage{amssymb}
\usepackage{fancyhdr}
\usepackage{setspace}
\usepackage{graphicx}
\usepackage[dvipsnames]{xcolor} % Load xcolor early with options
\usepackage{colortbl}
\usepackage{tikz}
\usepackage{pgf}
\usepackage{subcaption}
\usepackage{listings}
\usepackage{indentfirst}
\usepackage[
backend=bibtex,
style=numeric,
sorting=none,
maxbibnames=99
]{biblatex}
\bibliography{refs}

\usepackage[colorlinks,citecolor=blue,linkcolor=blue,bookmarks=false,hypertexnames=true, urlcolor=blue]{hyperref} 
\usepackage{mathtools}
\usepackage{booktabs}
\usepackage[flushleft]{threeparttable}
\usepackage{tablefootnote}
\usepackage{pst-solides3d}
\usepackage{xargs}                      % Use more than one optional parameter in a new commands
\usepackage{pdfpages}

\usepackage{chngcntr} % numbering of figures and tables by sections
\counterwithin{table}{section}
\counterwithin{figure}{section}

\graphicspath{{graphics/}}

\colorlet{darkred}{red!80!black}
\colorlet{darkblue}{blue!80!black}
\colorlet{darkgreen}{green!60!black}

\usepackage{secdot}
\sectiondot{subsection}
\usepackage{verbatim}
\usepackage{float}
\usepackage{todonotes}
\usepackage{enumerate}

\title{On the Complexity of Recurrence Evaluation}

\author{Artem Parfenov\thanks{
National Research University Higher School of Economics, Moscow, Russia; e-mail: dunno\_o@icloud.com}
\and
Michael Vyalyi\thanks
{Federal Research Center ``Computer Science and Control'' of the
  Russian Academy of Sciences, Moscow, Russia; National Research University Higher School of Economics, Moscow, Russia;  Institute of Physics and Technology, Dolgorpudnyi, Russia; 
 e-mail:
vyalyi@gmail.com}
}

\date{May 31, 2026}

\newtheorem{theorem}{Theorem}
\newtheorem*{theoremSTAR}{Theorem}

\newtheorem{lemma}{Lemma}

\newtheorem{claim}{Claim}

\theoremstyle{definition}
\newtheorem{defi}{Definition} 
\newtheorem{rem}{Remark} 

\newtheorem{prob}{Problem}

\let\es\varnothing

\def\bin{\mathop{\mathrm{bin}}\nolimits}
\newcommand{\NN}{\mathbb{N}}
\newcommand{\ZZ}{\mathbb{Z}}

\let\wbeg\triangleleft
\let\wend\triangleright

\newcommand{\D}{\mathcal{D}}

\def\cA{\mathcal A}

\newcommand{\blankTM}{\text{\texttt{\textvisiblespace}}}
\newcommandx{\unsure}[2][1=]{\todo[linecolor=red,backgroundcolor=red!25,bordercolor=red,#1]{#2}}
\newcommandx{\change}[2][1=]{\todo[linecolor=blue,backgroundcolor=blue!25,bordercolor=blue,#1]{#2}}

\newcommand{\poly}{\ensuremath{\mathrm{poly}}}
\newcommand{\CGE}{\ensuremath{\mathrm{CGE}}}
\newcommand{\sCGE}{\ensuremath{\mathrm{sCGE}}}
\newcommand{\symF}{\ensuremath{\mathrm{SymBF}}}

\newcommand{\PSPACE}{\ensuremath{\mathsf{PSPACE}}}
\newcommand{\Ptime}{\ensuremath{\mathsf{P}}}
\newcommand{\DTIME}{\ensuremath{\mathsf{DTIME}}}
\newcommand{\EXP}{\ensuremath{\mathsf{EXP}}}
\newcommand{\PP}{\ensuremath{\mathsf{PP}}}
\newcommand{\TC}{\ensuremath{\mathsf{TC}}}
\newcommand{\DLOGTIME}{\ensuremath{\mathsf{DLOGTIME}}}
\newcommand{\SIZE}{\ensuremath{\mathsf{SIZE}}}
\newcommand{\plusP}{\ensuremath{{\oplus}\mathsf{P}}}

\usetikzlibrary{calc, patterns}

\begin{document}
    \maketitle
	
	{
		\hypersetup{linkcolor=black}
		%\tableofcontents
	}

    \begin{abstract} 
      In this paper, we study the complexity of the recurrence evaluation problem. We are interested in finitely valued recurrent functions. We present two results in this direction. First, we study the recurrence problem for
      sequences, assuming that a recurrence relation is defined by a fixed function, while the offsets are part of the input. Depending on the form of presentation (whether the offsets are given in unary or in binary), the problem is \PSPACE-complete or \EXP-complete.

Second, we study recurrences defined by the NAND function. They are related to impartial games. 
We prove \PP-hardness of the recurrence evaluation problem for a very simple $3$-dimensional game, in which the offset vectors  are coordinate vectors $(1,0,0)$, $(0,1,0)$ and $(0,0,1)$ but  the boundary conditions  are arbitrary. In other words, we consider generalized  winning conditions for the game extending the normal and the mis\`ere winning conditions.

\textbf{Keywords:}
      Recurrences,  impartial games, space complexity, time complexity.  

\end{abstract}

	\section{Introduction}\label{sec:introduction}
	
	In many mathematical problems, functions are defined inductively:  the next value depends on previous ones. In theoretical computer science, a key question for any problem is the algorithmic complexity of its solution. In this paper we address this question for recurrent functions with domain $\NN^k$, an integer $k$-dimensional orthant, that take values in a finite set. A~function of this sort is defined by a  recurrence
\begin{equation}\label{basic-rec}
P(x) = f(P(x-a_1), P(x-a_2), \dots, P(x-a_d)),\quad x\in \NN^k,
\ a_i\in \ZZ^k, \ \sum_ia_i>0.
\end{equation}
It is a discrete version of a differential equation. If $f$ is a linear function, the recurrence is called \emph{linear}. The linear recurrences were intensively studied and have numerous applications including error-correcting codes. See the book \cite{EPShW03} for the detailed exposition.

It is well-known that, in the case of finitely valued recurrent sequences ($k=1$), a~recurrent sequence is eventually periodic. Indeed, the right hand side of the recurrent relation~\eqref{basic-rec} depends on a bounded number of previous terms, say $D = \max_{i} a_i$. Since the sequence takes values in a finite set, there are only finitely many possible tuples of $D$ consecutive values. By the Pigeonhole Principle, some $D$-tuple must eventually repeat. Because the recurrence is deterministic, the sequence becomes periodic from that point onward. It implies that evaluating this sequence belongs to the complexity class \Ptime. But a period of a finitely valued recurrent sequence can be exponentially large in the size of its description. It makes evaluation of a  recurrence  non-trivial if the~recurrence itself is a part of the input. To the best of our knowledge, the complexity of such a problem  is open. Note  that for linear recurrences over a finite field the period can be exponentially large. Nevertheless,  there exists a uniform polynomial time evaluation algorithm based on linear algebra (see, e.g.,~\cite{DBLP:journals/cj/MillerB66} or \cite {F85, Kh18} for more efficient versions).

For larger dimensions, the situation is drastically different. There exists a fixed finitely valued recurrence in the form~\eqref{basic-rec} such that the evaluation problem for this particular sequence is \EXP-complete even for $k=2$. This fact can be easily proved using the relation between 1-dimensional cellular automata and Turing machines.    

We are mainly interested in recurrences related to impartial games. In this case, the values of a recurrence are P and N (corresponding to a winner in a position). We will assign 0 to P-positions and 1 to N-positions. In that form, a function in a recurrence defined by impartial game rules is the NAND function. NAND-recurrent functions defined by recurrences in the form~\eqref{basic-rec} correspond to (multidimensional) subtraction games.

Ordinary subtraction games for $k=1$ were introduced in Golomb's paper \cite{GOLOMB1966443}. Note that winning conditions (say, normal or mis\`ere) determine the boundary conditions for the NAND-recurrence. For normal and mis\`ere winning conditions, the questions about period lengths and algorithmic complexity of NAND-recurrences are largely open.
The question about periods originates from~\cite{Guy1995}. For normal winning conditions, in~\cite{ALTHOFER1995111} superlinear lower bounds for period lengths were found.
Results for a fixed set of differences are given in \cite{HO201535,ZHANGLinPerSubGames}.

To date, there are no superpolynomial lower bounds for either normal or mis\`ere winning conditions.
For general boundary conditions, in a recent paper~\cite{DBLP:journals/ijgt/MiklosP24}, superpolynomial lower bounds were proved. The algorithmic complexity of evaluating NAND-recurrences is still open even for general boundary conditions.

For multidimensional subtraction games, Larsson proved in \cite{larsson2012heapsmatcheslimitscomputability} that, for some dimension $k$, the problem of determining the equivalence of two games is undecidable, which means no algorithm exists to solve this task.
In~\cite{VyaliyMultidimensionalSubtractionGames} this result was extended to algorithmic complexity. For some dimension $k$, there exists a multidimensional subtraction game such that solving the game is \EXP-complete.

Gurvich and Vyalyi conjectured in~\cite{VyaliyMultidimensionalSubtractionGames} that for non-negative offsets (vectors $a_i$ in~\eqref{basic-rec}), the set of P-positions is semilinear. It would imply that solving such games takes polynomial time. In specific cases (slow Nim($n,k$) for $n=6$, $k=2$ and $n=k+1$), the semilinear conjecture holds, see~\cite{Ch-etc21,Getc24}.  
    In the paper \cite{larsson2024subtractiongamesdimension}, a possible counterexample to the semilinear conjecture was presented. Computational experiments reveal a rather chaotic behaviour of P-positions of the proposed game, although a formal proof of non-semilinearity was not presented.

\paragraph{Our contribution.}
    
In this paper, we present new results in two directions. First, we prove the computational hardness of evaluating a 1-dimensional recurrent sequence for a fixed function $f$. The offsets $a_i$ are part of the input. Depending on the form of presentation (whether the offsets $a_i$ are given in unary or in binary), the problem is \PSPACE-complete or \EXP-complete. The function $f$ in the proof is related to a universal Turing machine and is far from the NAND-function. Nevertheless, this result provides evidence in favor of the computational hardness of solving subtraction games.

For NAND-functions, we prove the computational hardness for a very simple $3$-dimensional game, in which the offset vectors $a_i$ are coordinate vectors $(1,0,0)$, $(0,1,0)$, and $(0,0,1)$. The hardness stems from the boundary conditions, which are arbitrary (for normal or mis\`ere winning conditions, the game is trivial).
The evaluation problem for this NAND-recurrence is \PP-hard. This result shows that mixed winning conditions are much harder than normal or mis\`ere ones, and can be regarded as an algorithmic version of the main result in~\cite{DBLP:journals/ijgt/MiklosP24}.

        \section{Definitions and notation}\label{sec:defs}

We assume that a reader is familiar with computational complexity theory on a basic level. So we  
 skip the basic definitions of the theory: reductions, complexity classes, etc. They can be found in textbooks \cite{sipser2012introduction, AroraBorak, DuKo}. We make a technical remark about Turing machines (TM). We assume that the tape of a Turing machine is infinite in one direction and the input word starts from the leftmost position of the tape.

 For a word $w = w_0w_1\dots w_{n-1}$, we denote by $w[a:b]$ the subword (the segment) $w_a\dots w_b$.

We adopt the notation $\ZZ$ for integers and $\NN$ for non-negative integers.
For vectors $x,y\in \ZZ^k$, we use the notation $x\geq y$ for the coordinate-wise order: $x\geq y$ is equivalent to $x_i\geq y_i$ for all $1\leq i\leq k$.

Let $\Sigma$ be a finite set (an alphabet), $f\colon \Sigma^d \to \Sigma$ be a Boolean function, $\cA =\{a_1, \dots, a_d\}$, $a_i>0$, be a set of vectors in $\NN^k$ called \emph{offsets}, and $\D\subseteq \ZZ^k$ be a region in $\ZZ^k$. 
A function $P\colon \D\to \Sigma$ is an $(f,\cA)$-\emph{recurrent function} if 
\begin{equation}\label{general-rec}
P(x) = f(P(x - a_1), P(x - a_2), \dots, P(x - a_d)) 
\end{equation}
holds for any $x\in\D$ such that $x-a_i\in \D$ for all $1\leq i \leq d$.

To identify an $(f,\cA)$-recurrent function $P$ on a region $\D$,
one needs to specify boundary conditions, i.e., to determine the values of $P$ in those region points, for which the right hand side of~\eqref{general-rec} is undefined. We refer to these points as  \emph{boundary points}.

In this paper, we are interested in algorithmic complexity of solving recurrences.
More specifically, our base problem is to compute $P(x)$ for an $(f,\cA)$-recurrent function $P$ on a region~$\D$. The function $f$ is a parameter of the problem. Thus, $d=O(1)$. Therefore, we always assume that $f$ is given by the table of values. %The size of the table is $|\Sigma|^d$. 
In general, all other data are part of an input.

\begin{defi}
Let $\Sigma$ be a finite set, $f\colon \Sigma^d \to \Sigma$ be a Boolean function, $\cA =\{a_1, \dots, a_d\}$, $a_i>0$, be a set of positive integers.
An $(f,\cA)$-recurrent function on the region $\NN$ is called an $(f,\cA)$-\emph{recurrent sequence}.

If $\Sigma=\{0,1\}$, then an $(f,\cA)$-recurrent sequence is called \emph{Boolean}. 
\end{defi}

Specifying our base problem to recurrent sequences, we obtain the following two problems. 

\begin{prob}[\boldmath $RE (f,\Sigma,b)$ Recurrent Sequence Evaluation]\phantom{}
		Input: 
        the set of the offsets $\cA=\{a_1,\dots, a_d\}$ of an $(f,\cA)$-recurrent sequence are given in unary; the boundary conditions $P[0:n-1]\in \Sigma^n$, where $n = \max_{1\leq i\leq d} a_i$; the index $x$ is given in binary representation.
        
		Question: Is it true that $P(x) = b$?
	\end{prob}

If the offsets are given in binary, then there are exponentially many boundary points. To specify an input recurrent sequence, we need a succinct representation of a sequence  $P[0:n-1]$ of exponential length. We adopt the following way to represent this set succinctly. For later use we fix a special injection $\tau\colon \Sigma\to\{0,1\}^{\ell}$ defined as follows.

	\begin{defi}[Encoding $\tau$]\label{CodingDef}
		Map each symbol $s_i \in \Sigma = \{s_1, \dots s_k\}$ to a word $\tau(s_i)$ in the alphabet $\{0,1\}$ according to the following rule: $s_i \to 1100w_i0011$, $w_i = 0^{i-1}10^{k-i}$. 
		The code length of a symbol is $l = k + 8$.
        
    A word $s = s_0 s_1 \dots s_{n-1} \in\Sigma^*$ is encoded by a binary word $\tau(s) = \tau(s_0) \tau(s_1) \dots \tau(s_{n-1})$.
    \end{defi}

  A Boolean circuit $C\colon\{0,1\}^{\lceil \log n\rceil}\to \{0,1\}^\ell$ represents the sequence $P[0:n-1]$ if $C(\bin(y)) = \tau(P(y)) $, where $\bin(y)$ represents $y$ in binary and has length $\lceil \log n\rceil$.

	\begin{prob}[\boldmath $SRE (f,\Sigma,b)$ Succinct Recurrent Sequence Evaluation]\phantom{}

		Input: the  set of the offsets $\cA=\{a_1,\dots, a_d\}$ of an $(f,\cA)$-recurrent sequence are given in binary; a~Boolean circuit $C$ representing the boundary conditions (the sequence $P[0:n-1]$); the index $x$ is given in binary representation.
        
		Question: Is it true that $P(x) = b$?
	\end{prob}

	\section{Complexity of Evaluating Recurrent Sequences}

The straightforward way to compute the value of the $i$-th member of a recurrent sequence is to compute all preceding members starting from the boundary points and using a recurrence relation.
	Our goal is to show that this recursive algorithm is essentially the most efficient for solving general recurrences. More formally, we are going to prove the following claims.

	\begin{theorem}\label{result}
		There exists a Boolean function $f$ such that the problem $RE (f,\Sigma,b)$  is \PSPACE-complete. Moreover, there exists a Boolean function $g$ such that $RE (g,\{0,1\},b)$ is  \PSPACE-complete. 
	\end{theorem}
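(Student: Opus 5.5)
Membership in \PSPACE{} is the easy direction, and I will settle it first. The offsets are given in unary, so $n=\max_i a_i$ is polynomial in the input size. The naive algorithm stores only the window $P[y-n:y-1]$ and updates it $x$ times, where $x$ is given in binary. This uses polynomial space (a binary counter up to $x$ plus a window of $n$ symbols) and possibly exponential time.

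For hardness I will reduce from the acceptance problem of a fixed polynomial-space machine. Fix a deterministic TM $M$ deciding a \PSPACE-complete language in space $s(m)=\poly(m)$. I modify $M$ so that on acceptance or rejection it halts and its configuration then stays constant; a clock is not needed, since a space-bounded machine has at most exponentially many configurations. The standard picture is the space--time diagram of $M$ laid out as a one-dimensional sequence. Configuration $t$ occupies the block $P[(t+1)N : (t+2)N-1]$, where $N=s(m)+O(1)$ is a fixed row width, and configurations are encoded as words over $\Sigma=\Gamma\cup(Q\times\Gamma)$. Cell $j$ of row $t+1$ is determined by cells $j-1,j,j+1$ of row $t$, that is, by the sequence positions $y-N-1$, $y-N$, $y-N+1$. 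This suggests the offsets $\cA=\{N-1,N,N+1\}$ and $f$ equal to the local transition rule of $M$, which is a fixed function of $3$ arguments. The boundary conditions $P[0:N]$ are the initial configuration, padded with blanks. The queried index is $x=(T+1)N+j_0$, where $T=2^{\poly(m)}$ bounds the running time and is written in binary; arranging that the head parks at cell $j_0$ on halting, we take $b$ to be the accepting symbol.

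The main obstacle is the row boundary. Position $(t+1)N$ (cell $0$) receives as its left neighbour the last cell of the previous row, and cell $N-1$ receives the first cell of the next-but-one row. I will handle this by padding. A right end-marker that $M$ never crosses, together with a blank cell at the left whose value stays blank, makes the wrap-around neighbours irrelevant: if $f$ is defined so that a blank cell with no head adjacent stays blank whatever its other neighbours are, then the padding cells propagate unchanged. I also need every $a_i>0$ and the window to fit, which holds because $N\ge 2$; and the boundary given has length exactly $\max a_i=N+1$, as the problem requires. One more detail is that for a fixed $f$ the machine $M$ must be fixed, so I take $M$ to be a universal-style machine for a fixed \PSPACE-complete language, for instance one that simulates the linear-space machine $\mathrm{QBF}$ evaluator. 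The input $m$ then only affects $N$, the boundary word, and $x$. All of these are computable in polynomial time, with $N$ output in unary.

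For the Boolean version, $RE(g,\{0,1\},b)$, I will compile $\Sigma$ into binary blocks using the encoding $\tau$ of Definition~\ref{CodingDef}. Each symbol becomes $\ell=k+8$ bits, and the new offsets are $\ell(N-1)+r$, $\ell N+r$, $\ell(N+1)+r$ for a small range of shifts $r$. Each bit of the new row should be computable from the bits of the three old code blocks, but a bit at position $p$ inside a block does not know $p$. This is what the markers $1100$ and $0011$ in $\tau$ are for: among the window of roughly $3\ell$ offset bits, the unique occurrences of the delimiter patterns let a fixed function $g$ locate its alignment. So I will take $g$ to read offsets covering $\ell(N-1)-\ell,\dots,\ell(N+1)+\ell$. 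From the bit pattern it will recover the phase $p$ and the three decoded symbols, and then it will output bit $p$ of $\tau(f(\cdot))$. Checking that the alignment is uniquely determinable from such a window is the fiddly step I expect to take the most care, and I would verify it by arguing that $1100$ occurs in $\tau(s)\tau(s')$ only at block starts.
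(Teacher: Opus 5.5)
Your proposal is correct and follows essentially the paper's route. Both arguments lay out the space--time diagram of a fixed polynomial-space machine with offsets $N-1,N,N+1$, neutralize the row wrap-around by end cells (the paper uses a bordered machine with delimiters $\wbeg,\wend$), query an exponentially large index written in binary, and then compile to binary via $\tau$ with a fixed $g$ that recovers the phase. You read the phase off the window around the three source blocks rather than the preceding $\ell+3$ bits, which works equally well.

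Two small points to tidy. First, you must ensure $M$ never moves left onto your padding cell $0$, for example with a left marker, as $\wbeg$ provides in the paper. Second, your row layout $P[(t+1)N:(t+2)N-1]$ is off by one row relative to the boundary $P[0:N]$, which should hold configuration $0$ in $P[0:N-1]$ and the blank first cell of configuration $1$ in $P[N]$.
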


	\begin{theorem}\label{result2}
		There exists a Boolean function $f$ such that the problem $SRE (f,\Sigma,b)$  is $\EXP$-complete. Moreover, 
        there exists a Boolean function $g$ such that $RE (g,\{0,1\},b)$ is \EXP-complete. 
	\end{theorem}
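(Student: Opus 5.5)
The plan mirrors the \PSPACE{} construction of Theorem~\ref{result} and scales it up by one exponential. \textbf{Upper bound.} $SRE(f,\Sigma,b)$ is in \EXP{} by the straightforward algorithm. Evaluate $C$ on all $n\le 2^{\mathrm{poly}}$ boundary indices. Then iterate the recurrence up to $x$ while keeping only a sliding window of the last $n$ values. If $x$ is exponentially larger than $n$, detect the eventual period: the window state takes at most $|\Sigma|^n$ values, which is doubly exponential, so direct iteration is not enough. Instead I would bound $x$ in the reduction, so that only instances with $x\le 2^{\mathrm{poly}}$ matter for hardness. For membership on general inputs I would use the paper's convention on the input size, or simulate with the counter $x$ in binary. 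The same iteration puts $RE(g,\{0,1\},b)$ in \EXP{}; in fact it puts it in \PSPACE, since for unary offsets the window has polynomial length.

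\textbf{Hardness of the succinct problem.} Fix a universal TM $U$ with a time counter, so that acceptance of $(M,w)$ within $2^{p(|w|)}$ steps is \EXP-complete. Encode configurations of $U$ of length $N=2^{p}$ over $\Sigma=$ tape symbols $\cup$ (state, symbol) pairs. Lay the configurations consecutively in a 1-dimensional sequence with period $N+1$: cell $j$ of configuration $t$ sits at index $t(N+1)+j$. Take the offsets $a_1=N$, $a_2=N+1$, $a_3=N+2$ (written in binary), together with a separator marker. Let $f$ be the local transition rule of $U$ acting on the three cells above. Making $f$ independent of $N$ is exactly the trick already used for Theorem~\ref{result}, so I reuse that $f$ verbatim. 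The boundary segment $P[0:n-1]$ is the initial configuration padded with blanks. Its index-to-symbol map is computed by a polynomial-size circuit $C$ that compares $\bin(y)$ with $|w|$ and outputs $\tau(w_y)$ or $\tau(\blankTM)$. The query index $x$ is the position of the halting-state cell in row $2^p$. By standard cleanup, $U$ parks its head at cell $0$ on acceptance, and the query position can then be written in binary. The expected obstacle is the same as in Theorem~\ref{result}: keeping the separator and boundary bookkeeping local so that a single fixed $f$ works for all $N$.

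\textbf{The Boolean claim.} For a Boolean $g$, replace each symbol by its code $\tau(s)$ of length $\ell$. Multiply all offsets by $\ell$ and add shifts inside the block. Let $g$ read the bits of the neighbouring codes, using the markers $1100\ldots0011$ to locate the current position inside a block. This step is copied from Theorem~\ref{result}. With binary offsets and the boundary circuit $C$ composed with $\tau$, we obtain \EXP-hardness of the Boolean succinct problem.

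\textbf{The main obstacle.} As literally worded, the second claim concerns $RE(g,\{0,1\},b)$, which has unary offsets. The upper bound above puts that problem in \PSPACE. Hence \EXP-hardness of $RE(g,\{0,1\},b)$ is equivalent to $\PSPACE=\EXP$, and I do not expect to prove it. I would therefore prove the Boolean statement with binary offsets and a circuit-given boundary, that is, $SRE(g,\{0,1\},b)$, and flag that the statement must be read this way.
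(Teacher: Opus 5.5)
Your proposal is correct and is essentially the paper's proof: the same fixed $f$ from Theorem~\ref{result} with offsets $L-1,L,L+1$ now written in binary, a polynomial-size boundary circuit (you build it by hand where the paper cites Claim~\ref{PolyAlgoHasSmallCircuit}), and Lemma~\ref{SuccSigma2bin} for the Boolean case. That lemma concludes $SRE(g,\{0,1\},1)$, so your reading of the second claim is exactly what the paper proves, and the $RE$ in the statement is a slip.

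The only thing to fix is your upper-bound paragraph. There $x$ is written in binary, so $x<2^{|\text{input}|}$, and plain iteration up to $x$ with a sliding window already runs in exponential time. Your claim that direct iteration is not enough is wrong: no period detection and no restriction on $x$ are needed.
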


\subsection{Versions of the Bounded Halting Problem}

The proofs of Theorems~\ref{result} and \ref{result2} are based on reductions. We will reduce to $(S)RE$ problems the variants of the bounded halting problem.
    
	\begin{theoremSTAR}[Space Bounded Halting Problem (SBHP)]\label{Halting} \cite[Theorem~3.23]{DuKo}
		
		The language $\mathrm{SBHP}$ consists of triples $\langle M, w, 1^s \rangle$ such that the deterministic TM $M$ accepts the input $w$ in space~$s$.
		
		This language is complete in the class \PSPACE{}.
	\end{theoremSTAR}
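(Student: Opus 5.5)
The plan is to prove the two halves of completeness separately: membership of $\mathrm{SBHP}$ in \PSPACE{} by a direct simulation with a step counter, and \PSPACE-hardness by a generic reduction that pads the space bound in unary. Throughout, "$M$ accepts $w$ in space $s$" means that the computation of $M$ on $w$ reaches an accepting state while never visiting more than $s$ tape cells.

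\emph{Membership.} Given $\langle M, w, 1^s\rangle$, a universal simulator keeps the current configuration of $M$ written explicitly: the state, the head position, and the contents of the first $s$ cells. Each symbol of $M$'s tape alphabet $\Gamma$ and each state takes $O(\log|M|)$ bits. So one configuration occupies $O(s\log|M| + \log s)$ space, which is polynomial in the input length; note the input length is at least $s$ because $s$ is given in unary. The simulator applies the transition table of $M$, which it reads from the input, one step at a time, and it rejects as soon as the head tries to leave the first $s$ cells. To handle computations that neither accept nor leave the space bound, it also keeps a binary counter of the steps performed. The number of distinct configurations within space $s$ is $N = |Q|\cdot s\cdot|\Gamma|^s$. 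A deterministic run longer than $N$ steps must repeat a configuration, and therefore loops forever. The simulator rejects once the counter exceeds $N$. This counter needs only $O(\log N) = O(s\log|M|)$ bits. Hence the whole procedure runs in polynomial space, so $\mathrm{SBHP}\in\PSPACE$.

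\emph{Hardness.} Let $L\in\PSPACE$ be decided by a deterministic TM $M_L$ within space $p(n)$ for some polynomial $p$. Map an input $x$ to the triple $\langle M_L, x, 1^{p(|x|)}\rangle$. Here $M_L$ is a fixed string, and writing $1^{p(|x|)}$ takes polynomial time, so this is a polynomial-time (indeed logspace) reduction. Correctness holds because $M_L$ never exceeds space $p(|x|)$ on $x$. Therefore $M_L$ accepts $x$ in space $p(|x|)$ if and only if $x\in L$.

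\emph{Main obstacle.} The delicate step is the membership part, and specifically two points in it. First, the machine $M$ is part of the input, so the simulation overhead must be polynomial in $|M|$ as well as in $s$. This is why symbols are stored in binary rather than assuming a fixed alphabet. Second, the counter bound must be chosen correctly, so that non-halting computations inside the space bound are rejected. If necessary, one can first normalize $M_L$ in the hardness part so that it always halts within its space bound, for example by adding its own step counter. That normalization is routine.
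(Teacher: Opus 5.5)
Your proof is correct and is the standard argument for this result: membership by simulating $M$ inside the first $s$ cells with a binary step counter cut off at $|Q|\cdot s\cdot|\Gamma|^s$ (the same configuration count and pigeonhole argument the paper uses later), and hardness by the generic padding reduction $x\mapsto\langle M_L,x,1^{p(|x|)}\rangle$. The paper gives no proof of its own and simply cites \cite[Theorem~3.23]{DuKo}; your optional normalization of $M_L$ to always halt is unnecessary, since any non-accepting run, halting or not, already keeps the triple out of $\mathrm{SBHP}$.
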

	
	\begin{theoremSTAR}[Exponential-Time Bounded Halting Problem (EXP-BHP)]\label{ExpHalting} \cite[Proposition~3.30]{DuKo}
		
		The language $\mathrm{EXP\text{-}BHP}$ consists of triples $\langle M, w, n \rangle$ where $n$ is written in binary, such that the deterministic TM $M$ halts on input $w$ in $n$ moves.
		
		This language is complete in the class \EXP.      
	\end{theoremSTAR}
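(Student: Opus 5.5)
The plan is to prove the two halves of completeness separately: membership of $\mathrm{EXP\text{-}BHP}$ in \EXP{} by direct simulation, and \EXP-hardness by the generic reduction that encodes an exponential time bound in binary. The only point needing care is that the language is defined by \emph{halting} rather than acceptance, so the reduction must turn rejection into non-halting.

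For membership, on input $\langle M, w, n\rangle$ we run a universal simulation of $M$ on $w$ for at most $n$ steps. We accept if $M$ reaches a halting state within those $n$ moves, and reject otherwise. Each simulated step costs time polynomial in $|\langle M\rangle|$ and in the current tape length. The tape length is at most $|w|+n$. So the total running time is $\poly(|\langle M\rangle|,|w|,n)$. Since $n$ is written in binary, $n\le 2^{|\bin(n)|}$, and this bound is $2^{\poly(N)}$ where $N$ is the input length. Hence $\mathrm{EXP\text{-}BHP}\in\EXP$.

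For hardness, let $L\in\EXP$ be decided by a deterministic TM $M_L$ in time at most $2^{p(|x|)}$ for a polynomial $p$. First we modify $M_L$ into a machine $M'$ that behaves identically except that, instead of entering its rejecting state, it moves into an infinite loop. Then $M'$ halts on $x$ if and only if $x\in L$, and when it halts it does so within $2^{p(|x|)}$ steps (plus possibly a constant, absorbed into the bound). The reduction maps
\[
x\mapsto \langle M', x, 2^{p(|x|)}\rangle ,
\]
with the number written in binary as $1$ followed by $p(|x|)$ zeros. The machine $M'$ is fixed, so this map is computable in time polynomial in $|x|$. Correctness is immediate: $x\in L$ iff $M'$ halts on $x$ within $2^{p(|x|)}$ moves.

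There is no deep obstacle here; the construction is standard. The step requiring the most attention is the halting-versus-accepting issue handled above. If it were not addressed, a decider for $L$ would halt on every input and the reduction would be trivial. One should also confirm that the running-time bound of $M_L$ is time-constructible, or simply majorised by $2^{p(|x|)}$ as written. This is fine because in the reduction the bound appears only as a number written down, never as something simulated.
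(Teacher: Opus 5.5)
Your proof is correct and is the standard argument; the paper gives no proof of this statement itself and simply cites \cite[Proposition~3.30]{DuKo}. Your key step, replacing rejection by an infinite loop so that halting coincides with acceptance, is the same device the paper uses (via the absorbing state $q_n$) in the proof of Lemma~\ref{lem:bordered_pspace}.
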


        Let $M_{\PSPACE}$ be  a fixed deterministic Turing machine that decides a \PSPACE-complete language in polynomial space. For this fixed machine, the problem of deciding whether $M_{\PSPACE}$ accepts an input $w$ within space $s$ (given in unary) remains \PSPACE-complete. Similarly, there exists a  deterministic Turing machine $M_{\EXP}$ deciding an \EXP-complete language, for which the problem of deciding whether $M_{\EXP}$ halts on $w$ in $n$ steps (given in binary) is \EXP-complete.

	In the proofs below we need a special form of Turing machines.

\begin{defi}
    	The \emph{bordered} TM  $M= (Q, \Gamma, \sigma, q_0, q_f)$  satisfies the following conditions:
		\begin{enumerate}
			\item The alphabet $\Gamma$ contains special delimiter symbols $\wbeg, \wend\in\Gamma$ such that  
            \[
            \sigma(q, \wbeg) = (q', \wbeg, 1)\ \text{and}\  
            \sigma(q, \wend) = (q', \wend, -1)\]
            for all $q\in Q$, i.e., the machine does not move outside the workspace between the delimiters.
			\item The machine $M$ has exactly two special states: a ``preliminary'' final state $q_r$ and an actual final state $q_f$, such that $\sigma(q_r, \alpha) = (q_r, \alpha, +1)$ for all $\alpha \neq \wend$ and $\sigma(q_r, \wend) = (q_f, \wend, 0)$.
		\end{enumerate}
\end{defi}

We define a family of halting problems using a fixed bordered TM $M$ as a parameter.

	\begin{prob}[Bordered $\mathrm{SBHP}_M$]\label{NewHaltingSpace}\phantom{}
		Input: the input word $w$, the space size $S$ in unary representation.
		
		Question: Does the bordered TM $M$ halt starting from the 
        initial configuration: $\wbeg q_0 w \underbrace{\blankTM \dots \blankTM}_{S-|w|} \wend$, where  $q_0$ is the initial state of the machine, and $\blankTM\in\Gamma$ is the blank symbol?
	\end{prob}

	\begin{prob}[Bordered $\mathrm{EXP\text{-}BHP}_M$]\label{NewHaltingTime}\phantom{}
		Input: the input word $w$, the time bound $n$ in binary representation.
		
		Question: Does the bordered TM $M$ halt starting from the 
        initial configuration: $\wbeg q_0 w \underbrace{\blankTM \dots \blankTM}_{n-|w|} \wend$, where  $q_0$ is the initial state of the machine, and $\blankTM\in\Gamma$ is the blank symbol?
	\end{prob}

	\begin{lemma}\label{lem:bordered_pspace}
	  There exists a fixed bordered TM $M$ such that the Bordered $\mathrm{SBHP}_M$  is \PSPACE-complete.

   There exists a fixed bordered TM $M$ such that the Bordered  $\mathrm{EXP\text{-}BHP}_M$ is \EXP-complete.    
	\end{lemma}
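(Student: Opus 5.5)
The plan is to build $M$ from the fixed machines $M_{\PSPACE}$ and $M_{\EXP}$ fixed above, by a standard simulation, and to reduce the known complete problems to the bordered versions. Membership is immediate in both cases. For the bordered $\mathrm{SBHP}_M$, the machine never leaves the $S+2$ cells between the delimiters, so there are at most $|Q|\cdot|\Gamma|^{S+2}\cdot(S+2)$ configurations. We simulate $M$ in space $O(S)$ and either detect $q_f$ or stop after that many steps with a counter of polynomial size. For the bordered $\mathrm{EXP\text{-}BHP}_M$, we simulate $M$ on the tape of length $n+2$. The simulation is allowed to run at most exponentially many steps in $|\bin(n)|$: we make the machine $M$ we construct carry its own time-bound clock, so that it halts or gets stuck within $2^{O(n)}$ steps. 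The details are below.

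For the $\PSPACE$ part, let $M'$ be $M_{\PSPACE}$ modified in three ways. First, it treats $\wbeg$ and $\wend$ as walls: any attempt to move onto $\wbeg$ bounces back, and reaching $\wend$ means the space is exceeded. In the latter case it enters a non-halting loop that moves back and forth, which is allowed by the delimiter rule. Second, it adds a step counter written in binary in a separate track of the tape alphabet. The counter has length $O(S)$ and bounds the number of configurations of $M_{\PSPACE}$ in space $S$, so looping computations are detected and turned into a non-halting loop. Third, on acceptance it enters $q_r$, sweeps right to $\wend$, and goes to $q_f$. Rejection also becomes a non-halting loop. The reduction maps $\langle w,1^s\rangle$ to $\langle w,1^{cs}\rangle$ for a constant $c$ that leaves room for the counter track. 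Then $M'$ halts iff $M_{\PSPACE}$ accepts $w$ in space $s$. Actually the counter is unnecessary, since "halt" means "reach $q_f$" and non-halting is allowed to be infinite. This simplifies the construction: only the wall behaviour and the $q_r$ sweep are needed.

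For the $\EXP$ part, the subtlety is that the bordered problem gives the machine only $n$ cells, with $n$ in binary, and no explicit time bound. I will reduce $\langle w,n\rangle$ for $M_{\EXP}$ to $\langle w',N\rangle$ with $N=n+|w|+O(1)$. In $n$ steps $M_{\EXP}$ uses at most $n$ cells, so the space is sufficient. The time bound has to be enforced by $M$ itself. To do this, $w'$ encodes $w$ together with $\bin(n)$, and $M$ first writes the counter $\bin(n)$ on a second track. It then simulates $M_{\EXP}$ step by step, decrementing the counter after each simulated step by carrying it along with the head region, at a polynomial overhead in $\log n$ and tape length per step. If the counter reaches zero before halting, $M$ loops forever. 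If the simulated machine halts, $M$ enters $q_r$ and then $q_f$. The output $\langle w',N\rangle$ is computable in polynomial time, since $N$ is in binary.

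The main obstacle is the $\EXP$ case: implementing the step counter on a tape bounded by the delimiters, and checking that $M$ never writes over or moves past $\wbeg$ and $\wend$. I would handle this with a multi-track alphabet. The counter is stored in the leftmost $O(\log n)$ cells, and each simulated step is performed by a sweep that returns to the left end to decrement. This gives the required equivalence.
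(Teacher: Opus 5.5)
Your proposal is correct. For the \PSPACE{} part, once you drop the counter, it is the paper's construction: the bordered modification of $M_{\PSPACE}$ in which rejection and running into $\wend$ lead to an absorbing non-halting loop, acceptance goes through the $q_r$ sweep, and $S=s$. Your bounce off $\wbeg$ in the same state is, if anything, more faithful to the one-way-tape convention than the paper's jump to $q_n$.

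For the \EXP{} part you take a genuinely different route. The paper uses no clock. It takes $M_{\EXP}$ deciding an \EXP-complete language in time $2^{p(|w|)}$ and reduces from that language directly by $w\mapsto\langle w,2^{p(|w|)}\rangle$. The intrinsic time bound of $M_{\EXP}$ then does the work: the tape length $2^{p(|w|)}$ suffices, and on every input the bordered machine has entered $q_r$ or the absorbing $q_n$ within $2^{p(|w|)}+O(1)$ steps. That last fact is also what gives membership in \EXP{}, which the paper leaves implicit.

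You instead reduce from the bounded halting problem with an arbitrary binary bound $n$, which forces a self-clocked multi-track simulation. This costs more machinery, but it decouples the time bound from the tape length and makes an exponential running-time bound explicit on every input. The price downstream is running time: on the instances your reduction produces, your machine may run for order $N^2$ steps on a tape of length $N$, while the paper's runs for $O(N)$. So in the proof of Theorem~\ref{result2} one would take $t$ of that larger order instead of $t=n$. This is harmless, since $t$ is written in binary.

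Two small repairs are needed. First, the bound ``$2^{O(n)}$ steps'' in your membership paragraph is doubly exponential in $|\bin(n)|$ and would not give \EXP. What your clock actually provides is about $c\,(c+|w'|)+N$ steps, where $c\le 2^{|w'|}$ is the clock value and $N$ the tape length, i.e.\ $2^{O(\text{input length})}$. Second, if $M_{\EXP}$ is a decider that also halts on rejection, $M$ should enter $q_r$ only on acceptance, as in your \PSPACE{} part.
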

	
	\begin{proof}
	    Let $M_{\PSPACE}= (Q, \Gamma, \sigma, q_0, q_y,q_n)$ be a fixed deterministic Turing machine that decides a \PSPACE-complete language in  space $p(|w|)$, where $w$ is the input word and $p(\cdot)$ is a polynomial. The problem of deciding whether $M_{\PSPACE}$ accepts $w$ in space $s$ (where $s \ge p(|w|)$ is given in unary) is \PSPACE-complete.
	    
	    We construct $M$ as a bordered modification of $M_{\PSPACE}$. The states of $M$ are the states of $M_{\PSPACE}$ and two additional states $q_r$,  $q_f$. The transition function $\sigma$ coincides with the transition function of $M_{\PSPACE}$ except the following rules that guarantee that the conditions of a bordered machine hold:
            \[
        \begin{aligned}
            &\sigma(q, \wbeg) = (q_n, \wbeg, 1);\\
            &\sigma(q, \wend) = (q_n, \wend, -1);\\
            &\sigma(q_n, \alpha) = (q_n, \alpha,0) &&\text{for all $\alpha\in \Gamma$};\\
            &\sigma(q_y, \alpha) = (q_r, \alpha,0) &&\text{for all $\alpha\in \Gamma$};\\
            &\sigma(q_r, \alpha) = (q_r, \alpha, +1) &&\text{for all $\alpha \in \Gamma\setminus\{\wend\}$};
            &\sigma(q_r, \wend) = (q_f, \wend, 0).
        \end{aligned}
        \]
        The parameter $S$ is set to $s$. The machine $M$ starts from the initial configuration $\wbeg q_0 w\underbrace{\blankTM \dots \blankTM}_{S-|w|} \wend$ and simulates $M_{\PSPACE}$.
	    
	    If $M_{\PSPACE}$ accepts the input, $M$ enters the ``preliminary'' final state $q_r$, moves its head to the right until it reaches $\wend$, and halts in $q_f$. If $M_{\PSPACE}$ rejects the input or attempts to go outside the workspace, it enters an infinite loop keeping the  absorbing state $q_n$. 
        
	    This transformation takes polynomial time because we only need to compute $S = p(|w|)$ in unary and construct the initial configuration. Since $M_{\PSPACE}$ accepts $w$ in space $s$ if and only if the fixed bordered machine $M$ halts on space $S=s$, 
	    the bordered $\mathrm{SBHP}_M$ problem is \PSPACE-complete.

The second claim is proved similarly. The machine $M_{\PSPACE}$ should be replaced by a machine $M_{\EXP}$ deciding an \EXP-complete language in time $2^{p(|w|)}$, where $p(\cdot)$ is a polynomial. In this case the parameter $n$ is $2^{p(|w|)}$, where $w$ is the input word.  
	\end{proof}

\subsection{Reductions to Binary Alphabet}

	The encoding $\tau$ introduced in Section~\ref{sec:defs} admits the following alignment property.

	\begin{lemma}\label{CodingLemma}
		Let $\tau(s) = \tau(s_0) \tau(s_1) \dots \tau(s_{n-1})$ be an encoding of a word $s=s_0\dots s_{n-1} \in \Sigma^*$.  For any position $x\geq l+2$, the segment $\tau(s)[x - l - 2: x]$ determines uniquely the position $y$ in the symbol code $\tau(s_m)$, where the position $x$ is a position inside $\tau(s_m)$ in the word $\tau(s)$.
	\end{lemma}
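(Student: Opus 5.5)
Each symbol code $\tau(s_i)=1100\,w_i\,0011$ has length $l=k+8$, and $w_i$ contains exactly one $1$. The plan is to show that the pattern $0011\,1100$ (the junction between two consecutive codes) occurs in $\tau(s)$ only at code boundaries. The window $\tau(s)[x-l-2:x]$ has length $l+3$, which is larger than one code length plus the slack needed to contain a complete junction. So the offset of $x$ can be read off from where the junction sits inside the window.

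First, the occurrences of the substring $11$ in $\tau(s)$. Inside a single code $\tau(s_i)$ the $1$'s are the prefix $11$, the single $1$ of $w_i$ (which is flanked by $0$'s, since $w_i$ is bordered by $00$ on the left and $00$ on the right), and the suffix $11$. Across a junction the string reads $\dots0011\,1100\dots$, so the run of four consecutive $1$'s, \texttt{1111}, occurs exactly at junctions and nowhere else. There is one subtlety. The very first code is not preceded by a junction, but it begins with $11$ at position $0$ preceded by nothing. A maximal run of $1$'s of length exactly $2$ occurs only at position $0$ or at position $|\tau(s)|-2$. A run of length $1$ comes from some $w_i$, and a run of length $4$ is a junction.

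Second, the window. It covers positions $x-l-2,\dots,x$, that is, $l+3$ consecutive positions. I would show that it always contains, in a position recognizable from inside the window, a maximal run of $1$'s of length $4$ or a run that can be certified to start at position $0$. If $x\ge l+2$ and $x$ lies in code $m\ge1$, the window contains the whole stretch from the end of $\tau(s_{m-1})$ up to $x$, hence the junction \texttt{1111} between codes $m-1$ and $m$. We must make sure it is seen completely, including its bounding $0$'s so that its maximality is visible. Since the junction starts at most $l+2$ positions before $x$, the window should suffice, possibly with the bounding zeros treated carefully: the characters $00$ before \texttt{1111} and $00$ after it are what we need. Having located the junction at window position $j$, the offset is $y=x-(\text{absolute position of the junction}+2)$, computed from $j$ alone.

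If $x$ lies in $\tau(s_0)$, then $x\le l-1<l+2$, so that case is excluded by the hypothesis. The main obstacle is this boundary bookkeeping. The worry is that the junction might be only partially inside the window when $y$ is close to $l-1$. So I will check the extreme case: $x$ is the last position of $\tau(s_m)$ and $y=l-1$. The junction $11|11$ starting at absolute position $x-l-1$ then lies inside the window, which starts at $x-l-2$. That gives exactly one extra leading character, and we only need to distinguish \texttt{1111} from the other patterns, which uses only the four $1$'s themselves.

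Finally, uniqueness. If the window contains two occurrences of \texttt{1111}, their distance is $l$, and either occurrence gives a consistent offset. So the answer $y$ is well defined, and the map from windows to $y$ is a function.
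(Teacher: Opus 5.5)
Your strategy is the paper's: \texttt{1111} occurs in $\tau(s)$ only where two consecutive codes meet, so you locate a complete \texttt{1111} in the window and read off the offset. You are right to drop the bounding zeros in the end, since a window of length $l+3$ need not contain a whole $00111100$; that would need length $l+7$.

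The gap is in the step that is supposed to supply this \texttt{1111}. The junction between $\tau(s_{m-1})$ and $\tau(s_m)$ occupies positions $ml-2,\dots,ml+1$, so it reaches two positions into $\tau(s_m)$. For $x=ml$ (that is, $y=0$), the window $\tau(s)[x-l-2:x]$ contains only the first three of these four 1s. So ``the window contains the stretch from the end of $\tau(s_{m-1})$ up to $x$, hence the junction'' does not follow, and it fails for every $x\equiv 0 \pmod l$. You checked the wrong extreme. The case $y=l-1$ is harmless; the critical one is $y=0$, and for $y=1$ the junction just fits at the right edge. When $y=0$, the only complete \texttt{1111} in the window is the junction between $\tau(s_{m-2})$ and $\tau(s_{m-1})$. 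It sits at window position $j=0$ (absolute position $x-l-2$), and it exists because $x\ge l+2$ forces $m\ge 2$. Your formula $y=x-(\text{position}+2)=l-j$ then returns $l$ instead of $0$.

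The repair is short. A window of length $l+3$ contains exactly one complete \texttt{1111}, because two of them start at least $l$ apart and so would span at least $l+4$ positions. This makes your ``two occurrences'' paragraph vacuous; moreover, two occurrences would give offsets differing by $l$, not consistent ones. The unique occurrence has window position $0\le j\le l-1$, and $y=(l-j)\bmod l$.

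The paper obtains existence by a different and correct argument. It takes the maximal $e$ with $\tau(s)[e:e+3]=1111$ and $e+3\le x$. The next quadruple starts at $e+l$ and must end after $x$, which gives $e\ge x-l-2$ and covers $y=0$ automatically. The paper's final sentence, ``the symbol code starts at position $e+2$'', is itself off by $l$ when $y=0$, so there too the offset has to be read modulo $l$.
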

	
	\begin{proof}
		Our goal is to restore alignment, i.e., find the position of the first bit of the symbol code $\tau(s_m)$. The position $y$ is determined by this information. Let us present an aligning algorithm. Note that in the code of any word, four consecutive ones occur only at the boundaries of two symbol codes.
        Find the maximal $e$ such that the segment $\tau(s)[e: e+3]$ contains only 1s and $e+3 \leq x$. Informally, the segment is         
        the nearest quadruple of 1s to the left of $x$. Then the symbol code starts at position $e + 2$.
        
		By definition, $e$ is the starting position of the last quadruple of 1s that fully fits before $x$ (i.e., $e+3 \le x$). The next quadruple of 1s will start exactly $l$ symbols later, at position $e+l$, and end at $e+l+3$. Since $e$ is maximal, this next quadruple cannot fully fit into the prefix up to $x$. This means $e + l + 3 > x$, which implies $e \ge x - l - 2$. Thus, the minimal possible value of $e$ is $x - l - 2$. Therefore, it is sufficient to look at the segment $\tau(s)[x - l - 2: x]$, which has length $x - (x - l - 2) + 1 = l + 3$.
	\end{proof}

		We  reduce the recurrence evaluation problem in a fixed finite alphabet  to the binary recurrence evaluation problem in the following way. Let  $P\colon \NN \to \Sigma$, where $|\Sigma| < \infty$, be a recurrence 
		\begin{equation}\label{rec-eq}
	P(x) = f(P(x - a_1), P(x - a_2), \dots, P(x - a_d)), \quad P(0)= P_0,\ P(1) =P_1 ,\ \dots, P(a-1) = P_{a-1}, \ a = \max_i a_i. 	    
		\end{equation}

		We want to define a binary sequence $T = \tau(P(0)) \tau(P(1)) \dots $ by a recurrence with the maximum offset $l \cdot a$, where $l= |\Sigma|+8$. Boundary conditions are given by the $\tau$-encoding of the first $a$ members of the sequence $P$. Define a Boolean function $g$ depending on $d \cdot l + (l + 3)$ variables. Informally, the arguments of this function are $\tau$-codes of the arguments of function $f$ and
        $l+3$ members of the sequence preceding the current position. Accordingly, the required offsets from the current position $i$ are of two types: offsets $1, 2, \dots, l+3$ to restore the alignment, and offsets of the form $j + l \cdot a_v - r$, where $0 \leq r \leq l - 1$, $1\leq v\leq d$, to read the encodings of the function $f$ arguments. Here $j$ ($0 \leq j \leq l - 1$) is the index of the current bit within the code of $P(x)$. From the segment $T[i - l - 3: i - 1]$, one can determine the start position $i - j$ of the code of the current symbol of sequence $P$ according to Lemma \ref{CodingLemma}. The value of the current symbol is determined by function $f$ with arguments $P(x - a_1), P(x - a_2), \dots, P(x - a_d)$. The code of symbol $P(x - a_v)$ fills the segment $$[i - j - a_v \cdot l, i - j - a_v \cdot l + l - 1].$$ Function $g$ determines symbols $P(x - a_1), P(x - a_2), \dots, P(x - a_d)$ from their encodings, applies function $f$ to them, and outputs the value equal to the $j$-th bit of the code $f(P(x - a_1), P(x - a_2), \dots, P(x - a_d))$.
		
		The value table of function $g$ has size $2^{d \cdot l + (l + 3)}$, which is polynomially bounded by the size of the value table of function $f$, i.e., $|\Sigma|^d$, provided $l = |\Sigma| + 8 = O(1)$.

These arguments can be applied to prove  the following lemmas.

\begin{lemma}\label{Sigma2bin}
For any Boolean function $f$ there exists a Boolean function $g$ such that 
    $RE (f,\Sigma, b)$ is polynomially reducible to $RE (g,\{0,1\}, 1)$. 
\end{lemma}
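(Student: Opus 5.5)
The reduction follows the construction sketched just before the statement. Given an instance of $RE(f,\Sigma,b)$, consisting of unary offsets $\cA=\{a_1,\dots,a_d\}$, a boundary word $P[0:a-1]$ with $a=\max_i a_i$, and a binary index $x$, we produce an instance of $RE(g,\{0,1\},1)$ for the binary sequence $T=\tau(P(0))\tau(P(1))\dots$.

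\textbf{The new instance.} The new offsets are the numbers $1,\dots,l+3$ together with all numbers $j+l\cdot a_v-r$, where $0\le j,r\le l-1$ and $1\le v\le d$. Their maximum is at most $la+l+3=O(a)$, so the offsets can be written in unary in polynomial time.

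One issue must be handled. The recurrence for $g$ has fixed arity, and a single fixed Boolean $g$ must work for every input $\cA$. But the offsets $j+la_v-r$ depend on the unknown current bit index $j$.

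To fix this, take as arguments of $g$ the $l+3$ preceding bits together with, for each $v$, the $2l-1$ bits at distances $la_v-(l-1),\dots,la_v+(l-1)$. This is a window of fixed width around $la_v$. Whatever $j$ is, the code of $P(x-a_v)$ lies inside this window: it occupies distances $la_v+j-(l-1),\dots,la_v+j$ from position $i$. Hence $g$ has $(l+3)+d(2l-1)$ arguments, a constant depending only on $f$. Its table is defined once and for all. It uses the alignment information from the preceding bits (Lemma~\ref{CodingLemma}) to find $j$, extracts the $\tau$-codes of the arguments from the windows, decodes them, applies $f$, and outputs bit $j$ of $\tau(f(\cdot))$. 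On argument tuples that are not valid encodings, $g$ may be defined arbitrarily.

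\textbf{Boundary conditions.} Alignment needs $l+3$ preceding bits, and reading the arguments needs $la_v+l$ bits. We therefore take as boundary the word $\tau(P[0:a'-1])$, where $a'=a+2$ or so, enough that its length covers the new maximal offset. The extra values $P(a),\dots,P(a'-1)$ are computed directly by the recurrence in polynomial time, since $a$ is unary. The boundary length must equal the new maximal offset exactly, as $RE$ prescribes. So we either pad the offset set with a dummy offset equal to $la'$, which $g$ ignores, or choose $a'$ so that the lengths match.

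\textbf{Correctness.} We prove by induction on positions $i\ge la'$ that $T(i)$ equals the corresponding bit of $\tau(P(\lfloor i/l\rfloor))$. At each such position, the preceding $l+3$ bits are a correct encoding, so Lemma~\ref{CodingLemma} recovers $j=i\bmod l$. The windows then contain the correct codes of $P(x-a_v)$, so $g$ outputs the right bit.

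\textbf{Reducing to the bit value $1$.} Finally, $P(x)=b$ holds iff $\tau(P(x))=\tau(b)$. Because $\tau(b)$ is $1100w_b0011$ with a single $1$ in the $w$ part, $P(x)=b$ is equivalent to $T(lx+4+\mathrm{idx}(b))=1$, where $\mathrm{idx}(b)$ is the index of $b$ in $\Sigma$. This new index is computable in binary in polynomial time. So the target problem is $RE(g,\{0,1\},1)$, and $b$ enters only through the index, not through $g$.

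\textbf{Main obstacle.} The step most likely to need care is the consistency between the fixed arity of $g$ and the offset format demanded by $RE$. This means the window trick, plus padding so that the boundary length equals the maximum offset. The rest is routine.
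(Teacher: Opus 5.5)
Your proposal is correct and takes the same route as the paper. You encode $P$ via $\tau$, recover the in-code position $j$ from the $l+3$ preceding bits by Lemma~\ref{CodingLemma}, read the argument codes near distance $la_v$, and test the single distinguishing bit of $\tau(b)$ at a shifted index. Your fixed window of $2l-1$ offsets per argument, and your padding of the boundary (extra values of $P$ plus a dummy offset $la'$), also make precise two points the paper's sketch leaves loose: its offsets $j+la_v-r$ depend on $j$, and a boundary of length $la$ does not match the maximal offset $la+l-1$ that $RE$ requires.
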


\begin{proof}
    The function $g$ defines the recurrence for the binary sequence $T$, which is  the sequence of $\tau$-codes of the symbols in the sequence defined by the recurrence~\eqref{rec-eq}. Let $x$ be the index in the input of $RE (f,\Sigma, b)$ and $k$ be the number of the symbol $b$ in the enumeration of $\Sigma $ used in $\tau$-encoding. Then the index in the input of $RE (g,\{0,1\}, 1)$ is $(x-1)(|\Sigma|+8)+4+k$. It is computed in polynomial time as well as offsets for the recurrence defining $T$ and initial data of this sequence. 
\end{proof}

\begin{lemma}\label{SuccSigma2bin}
For any Boolean function $f$ there exists a Boolean function $g$ such that 
    $SRE (f,\Sigma, b)$ is polynomially reducible to $SRE (g,\{0,1\}, 1)$. 
\end{lemma}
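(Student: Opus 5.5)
We reuse the construction that proves Lemma~\ref{Sigma2bin}. The only new ingredients are that the offsets are now written in binary and that the boundary data arrive as a circuit. Take the function $g$ built before Lemma~\ref{Sigma2bin}. It has $d\cdot l+(l+3)$ arguments, where $l=|\Sigma|+8$. Its first $l+3$ arguments are the bits at offsets $1,\dots,l+3$, which restore the alignment via Lemma~\ref{CodingLemma}. Its remaining arguments are the bits at offsets $j+l\cdot a_v-r$, used to read the $\tau$-codes of $P(x-a_v)$.

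Since $g$ depends only on $f$ and $\Sigma$, it is a fixed function. The set of binary offsets for $T$ is
\[
\{1,\dots,l+3\}\cup\{\,j+l a_v-r : 0\le j,r\le l-1,\ 1\le v\le d\,\}.
\]
There are $O(1)$ such offsets, and each is computed in binary from the binary $a_v$ in polynomial time. The new maximal offset is $n'=l\cdot a+O(l)$. The target index is computed as in the proof of Lemma~\ref{Sigma2bin}: we ask whether the bit of $T$ at position $(x-1)(|\Sigma|+8)+4+k$ equals $1$.

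The main step is to build, in polynomial time, a circuit $C'$ with $\lceil\log n'\rceil$ inputs that outputs the boundary bit $T(y)$ for $y<n'$. Here $\tau$ is taken with $\ell=1$, since $T$ is binary. The natural definition is
\[
T(y)=\tau(P(\lfloor y/l\rfloor))_{\,y \bmod l}.
\]
Division by the constant $l$ and the remainder modulo $l$ are computed by polynomial-size circuits in the bit length, for example by schoolbook long division. We then feed $\lfloor y/l\rfloor$ into $C$ and use a multiplexer on the remainder to select one bit of the $l$-bit output. All of this takes size $\mathrm{poly}(|C|,\log n')$.

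One point needs care. The boundary region of $T$ has length $n'$, which may be slightly larger than $l\cdot a$, because some offsets exceed $l a$ by up to $l-1$. It may also be that $n'$ is not a multiple of $l$. To handle both, we extend the boundary to the first $a+1$ symbols of $P$, or pad with whole blocks. The extra term $P(a)$, or the few needed terms, equals $f(P(a-a_1),\dots,P(a-a_d))$, and it can be hardwired into $C'$ by evaluating $C$ at the required indices $a-a_v$. This is polynomial, since only $O(1)$ additional symbols are involved.

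We then check correctness. Beyond the boundary region, the recurrence given by $g$ reproduces $\tau$-codes. At every position $i\ge n'$ the window $T[i-l-3:i-1]$ lies inside an already valid $\tau$-coded prefix, so Lemma~\ref{CodingLemma} recovers $j$ correctly. Likewise, the argument codes sit at the correct places.

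I expect the main obstacle to be this boundary bookkeeping: making sure that every position treated as a recurrence position has all its offsets pointing into correctly encoded data, and that the boundary circuit covers exactly $[0,n'-1]$. Once that is settled, the rest is direct.
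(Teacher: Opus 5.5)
Your proposal is correct and follows the paper's route: reuse the recurrence $g$ for the $\tau$-coded sequence $T$ from Lemma~\ref{Sigma2bin}, and derive the boundary circuit for $T$ from $C$, which already outputs the $l$-bit blocks of $T$. The paper's proof only asserts that $C$ can be reused as is, so your explicit $C'$ (division by $l$ and a multiplexer, plus hardwiring $\tau(P(a))$ because the offsets reach $la+l-1$) is a more careful version of the same argument; the one nit is that under the paper's $\tau$ a binary symbol is coded by $10$ bits rather than $1$, which needs only a constant-size post-processing of $C'$.
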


The proof is the same as for Lemma~\ref{Sigma2bin}. Note that the Boolean circuit $C$ is the same since we use $\tau$-encoding for the sequence $T$.

	\subsection{Proof of Theorem \ref{result}}

        To prove Theorem \ref{result}, we need two results. First, we note that the problem $RE (f,\Sigma, b)$ is solvable in polynomial space for any $f, \Sigma, b$. To compute the recurrence in polynomial space, we need to maintain an array of $\max_i a_i$ elements and a counter. Since the next value depends only on previous ones no further back than $\max_i a_i$, we only need to store them in the array and update the array  and the counter. Since the offsets are given  in unary, the memory used is polynomially bounded by the input size. 

To prove \PSPACE-hardness, we apply Lemma~\ref{lem:bordered_pspace} and choose  a bordered TM 
 $M$ such that the bordered $\mathrm{SBHP}_M$ (Problem~\ref{NewHaltingSpace}) is \PSPACE-complete. After that, we reduce the bordered $\mathrm{SBHP}_M$ to $RE (f,\Sigma, b)$ for appropriately chosen $f$, $\Sigma$, $b$ by a polynomial $m$-reduction.

The main idea is to define a recurrence generating a \emph{protocol}:  a sequence  of consecutive configurations of the TM. We assume that a halting configuration is infinitely repeated in the protocol.  Since the machine is deterministic,  if a configuration repeats in the protocol, the protocol is eventually periodic. Let $t$ be an upper bound for the number of configurations reachable by the machine running from the initial configuration. If we look at the machine's state at step $t + 1$ (i.e., in the $(t+1)$-th member of the protocol), it will either be in the final state (meaning the machine halted), or it will never halt, since it visited some configuration twice, by the Pigeonhole Principle.
	
	Thus, we need an auxiliary lemma estimating the number $t$. It is quite standard. We add it for completeness of the exposition.
    
	\begin{lemma}
		Let $M$ be a deterministic TM with $q$ states and  an alphabet $\Gamma$ such that $M$ works in space $S$. Then the number of distinct configurations of $M$ is at most $q \cdot S \cdot |\Gamma|^{S}$.
	\end{lemma}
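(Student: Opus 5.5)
We will bound the number of configurations by a direct counting argument: we show that every configuration is determined by three independent pieces of data, and then multiply the number of possible choices for each. A configuration of $M$ working in space $S$ consists of the current state $q'\in Q$, the head position $h$, and the content of the work tape. Since $M$ works in space $S$, only $S$ cells of the tape are ever visited. All other cells keep their initial content for the whole run. For a bordered machine these are the delimiters $\wbeg,\wend$, which are never overwritten, and cells beyond them, which are never reached. So the configuration is a function of the triple $(q', h, c)$. Here $c\in\Gamma^S$ is the content of the $S$ workspace cells and $h\in\{1,\dots,S\}$ is the head position within the workspace.

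The key steps, in order, are as follows. First, fix the convention that a configuration is the triple (state, head position, tape content restricted to the workspace). Check that two configurations reachable in the run coincide as configurations if and only if their triples coincide. This holds because the rest of the tape is constant along the run. Second, count the choices. There are $q$ choices for the state, at most $S$ for the head position, and at most $|\Gamma|^S$ for the tape content. The product rule then gives at most $q\cdot S\cdot|\Gamma|^S$ distinct configurations.

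The only delicate point is the bookkeeping of what ``works in space $S$'' means: whether the delimiter cells are counted among the $S$ cells, and whether the head may rest on a delimiter. This can shift the count to something like $(S+2)$ head positions. I would resolve this by adopting the convention that the $S$ cells include every cell the head visits. For the bordered machine this means all cells between and including $\wbeg$ and $\wend$, with $S$ adjusted accordingly. Then the head position ranges over exactly these $S$ cells. The delimiters are among them and contribute symbols of $\Gamma$, so the bound $|\Gamma|^S$ still covers them, even though it is generous there. With this convention the estimate holds verbatim. Any polynomial slack in this convention is irrelevant for the later use of the lemma, which only needs $t$ to be at most exponential in the input size, so that $t+1$ can be written in binary within polynomial length.
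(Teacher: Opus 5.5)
Your proof is correct and follows the same route as the paper: a configuration is determined by the state, the head position among the $S$ workspace cells, and the word in $\Gamma^S$ written on them, so the product rule gives $q\cdot S\cdot|\Gamma|^S$. Your convention of counting the delimiter cells as part of the workspace is exactly how the paper later applies the lemma, taking $t = |Q|\cdot(S+2)\cdot|\Gamma|^{S+2}+1$.
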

	
	\begin{proof}
        Note that a configuration consists of a state, a head position, and a tape content. We have $q$ states. The tape is represented by  a word of length $S$ in alphabet $\Gamma$. Thus, there are $|\Gamma|^{S}$ possible tape contents. Finally, the head can be over any tape cell in the working zone, so there are $S$ possible positions. In total, there are $q \cdot S \cdot |\Gamma|^{S}$ possible configurations.
	\end{proof}

		Now we provide the details of the reduction.

		Let $M = (Q, \Gamma, \sigma, q_0, q_f)$ be the fixed bordered Turing machine such that the bordered $\mathrm{SBHP}_{M}$ (Problem~\ref{NewHaltingSpace}) is \PSPACE-complete.

We need to choose an appropriate problem $RE (f,\Sigma, b)$. The alphabet is chosen as:
\[
\Sigma = \{(q, \alpha) : q \in Q \cup \{\varepsilon\}; \ \alpha \in \Gamma\}.
\]
So, an alphabet symbol is a pair (a state of TM,   a symbol of TM), where we extend the state set by a void state $\varepsilon$. The symbol $b$ is $(q_f, \wend)$. 

The  function $f(x,y,z)$ is defined by  the  transition function of $M$ as follows:
		\begin{enumerate}
			\item Suppose that $y = (q, \alpha)$, where $q \neq \varepsilon$, and  $\sigma(q, \alpha) = (q', \alpha', m)$, where $m \in \{-1, 0, 1\}$ is the head movement. If $m \neq 0$ then $f(x,y,z) = (\varepsilon, \alpha')$. If $m = 0$ then $f(x,y,z) = (q', \alpha')$.
			\item Suppose that $x= (q, \alpha)$, where $q \neq \varepsilon$, and $y = (\varepsilon, \beta)$, and $\sigma(q, \alpha) = (q', \alpha', +1)$. Then $f(x,y,z) = (q', \beta)$.
			\item Suppose that $z = (q, \alpha)$, where $q \neq \varepsilon$, and $x =(\varepsilon, \beta') $, $y = (\varepsilon, \beta)$, and $\sigma(q, \alpha) = (q', \alpha', -1)$. Then $f(x,y,z) = (q', \beta)$.
			\item Otherwise, $f(x,y,z) = y$.
		\end{enumerate}
		
		The function $f$ is defined by a constant value table of size $|\Sigma|^3$, since $\Sigma$ is fixed.
		
An instance of  $\mathrm{SBHP}_{M}$ consists of the input word $w$ and the zone size $S$, given in unary.
The reduction function must output offsets, an index $x$, and the boundary conditions for the problem $RE (f,\Sigma, b)$. Our goal is to encode the protocol of $M$ by a sequence $P(0), P(1), \dots $ in the alphabet $\Sigma$ such that 
$P[kL:kL+L-1]$, where  $L = S+2$,
 encodes a $k$-th configuration in the protocol in a natural way: the second coordinates of pairs in symbols $P(j)$ define the tape content, while the non-empty first coordinate indicates the head position and the current state. So, in a configuration encoding there is exactly one symbol with non-empty first coordinate.

   Offsets are $L - 1, L, L + 1$. 
      The boundary conditions are defined by the initial configuration  $\wbeg q_0 w \underbrace{\blankTM \dots \blankTM}_{S-|w|} \wend$ of $M$ and include also the first symbol of the second configuration. It is determined by the transition function of $M$. More exactly, if $\delta(q_0,w_1) =(q', \alpha', m)$, then $P(L) = (a, \wbeg)$, where $a=q'$ in the case $m=-1$ and $a=\varepsilon$ otherwise.
      Thus, the boundary conditions form the sequence
      \[
      P[0:L] =\big( (\varepsilon, \wbeg), (q_0,w_0), (\varepsilon, w_1), 
      \dots,(\varepsilon, w_n), (\varepsilon, \blankTM), \dots, (\varepsilon, \blankTM), (\varepsilon, \wend), (a,\wbeg)\big).
      \]
 Computing the boundary conditions takes linear time in the input length.

 Thus,  the  recurrence relation is 
 \begin{equation}\label{reduction-rec}
 P(i) = f(P(i - 1 - L), P(i - L), P(i + 1 - L)).
 \end{equation}

		\begin{claim}
			The recurrence \eqref{reduction-rec} generates the protocol of operating $M$ starting from the initial configuration.
		\end{claim}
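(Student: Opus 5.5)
We prove the claim by induction on the index $i\geq L$, with the following statement as the induction hypothesis. For every $k\geq 0$, the block $P[kL:kL+L-1]$ is the natural encoding of the $k$-th configuration $C_k$ of the protocol, where after halting the configuration $C_k$ is repeated forever. The key observation is the following. Position $i=kL+j$ with $1\leq j\leq L-2$ lies strictly inside the $(k+1)$-th block. The offsets $L+1,L,L-1$ then point to positions $kL+j-1$, $kL+j$, $kL+j+1$ of the $k$-th block, that is, to the cell $j$ of $C_k$ and its two neighbours. Since one TM step changes only the cell under the head and moves the head by at most one cell, the new content of cell $j$ is a function of this three-cell window alone. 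It remains to check that $f$ computes exactly this function.

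The main step is a case analysis on where the head of $C_k$ sits relative to the window $(x,y,z)=(P(i-1-L),P(i-L),P(i+1-L))$.
\begin{enumerate}
\item If the head is at cell $j$, rule~1 writes $\alpha'$. The state is kept in the cell if $m=0$, and the first coordinate becomes $\varepsilon$ otherwise.
\item If the head is at cell $j-1$ and moves right, rule~2 places the new state $q'$ on the unchanged symbol $\beta$.
\item If the head is at cell $j+1$ and moves left, rule~3 places $q'$ on $\beta$. Here $x$ is void because the head is unique.
\item In every other situation the cell is unchanged, and rule~4 copies $y$.
\end{enumerate}
Uniqueness of the head within a configuration encoding guarantees that at most one of rules~1--3 applies. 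A neighbour's head moving away from cell $j$, or staying in place, falls under rule~4 and correctly leaves the cell unchanged. Once $C_k$ is halting with state $q_f$, we use $\sigma(q_f,\wend)$ as the convention that keeps the configuration fixed, so the protocol repeats. I would state explicitly that in $f$ the halting state is treated as stationary, exactly as the paper intends by ``a halting configuration is infinitely repeated''.

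The main obstacle is the pair of block boundaries $j=0$ and $j=L-1$. There the window wraps around: it reads the last cell of block $k-1$ or the first cell of block $k+1$. For $j=L-1$, the cell holds $\wend$ and $z=P((k+1)L)$ is the $\wbeg$ of the next block. I would use the bordered property to handle this. The delimiters never change. The head can sit on $\wend$ only in $q_f$, or momentarily before bouncing back with move $-1$, and rule~1 handles that case inside the window. A head arriving from the left onto $\wend$ is handled by rule~2, since $x$ is in the same block. The spurious neighbour $z$ carries the next block's $\wbeg$. One has to check that $z$ can only trigger rule~3 when $\sigma(q,\wbeg)$ moves left, and this is impossible because bordered machines always move right on $\wbeg$. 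So the spurious neighbour is harmless.

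Symmetrically, for $j=0$ the spurious neighbour $x$ is the $\wend$ cell of the previous block. On $\wend$ the bordered machine moves left, so rule~2 never fires, and a head on $\wbeg$ produces $(\varepsilon,\wbeg)$ with the state moving right. This is precisely why the boundary value $P(L)$ had to be supplied by hand. At $i=L$ the window would reach $P(-1)$, which does not exist, so the base of the induction is exactly the given $P[0:L]$. The induction then proceeds position by position for $i\geq L+1$, because every index $i-1-L, i-L, i+1-L$ is smaller than $i$. Concluding, each block $k+1$ equals the encoding of $C_{k+1}$.
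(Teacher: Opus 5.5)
Your proof is correct and follows the paper's approach. Both argue by induction on the index: the interior case follows directly from the definition of $f$, and the two cases where the window straddles a block boundary are handled via the bordered property and the repeated-halting convention, which you treat more carefully than the paper's sketch does. Two small fixes are needed. First, the position should be written $i=(k+1)L+j$ rather than $kL+j$. Second, at $j=0$ rule~3 also requires $x=P(kL-1)$, the $\wend$ cell of $C_{k-1}$, to be void. This holds for $S\ge 2$, because a head at cell $1$ in $C_k$ came from cell $0$, $1$ or $2$ of $C_{k-1}$, not from cell $L-1=S+1$.
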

		
		\begin{proof}
			Allowing some ambiguity in terminology, we will call encodings of configurations as ``configurations''.
			
			The proof is by induction on the symbol index in the sequence.
			
			Base case: the first $L$ initial members of the recurrence form the initial configuration.
			
			Induction step: assume the sequence $P(j)$, $j \in \{0, 1 \dots, i - 1\}$, defines $\lfloor \frac{i}{L} \rfloor$ configurations of the machine's operation and some symbols of the current configuration.
			\begin{enumerate}
				\item If symbols $P(i - 1 - L), P(i - L), P(i + 1 - L)$ are part of the same configuration, then this is the previous configuration. In this case, $P(i) = f(P(i - 1 - L), P(i - L), P(i + 1 - L))$ by definition of the function $f$.
				\item If $P(i - 1 - L)$ belongs to one configuration and $P(i - L)$ to another, then $P(i) = P(i - L)$, which corresponds to the definition of $f$ in both cases: when $P(i - L) = (\varepsilon, \wbeg)$ and when $P(i - L) = (q, \wbeg)$ (due to definition of a bordered TM).
				\item Similar arguments are applicable to the case when $P(i - L)$ belongs to one configuration and $P(i + 1 - L)$ to another.\qedhere
			\end{enumerate}
		\end{proof}
		
		We choose the value $t = |Q| \cdot (S+2) \cdot |\Gamma|^{S+2} + 1$, which is strictly greater than the number of possible configurations of $M$ in space $S+2$. Thus, the machine $M$ halts if and only if it reaches the final state $q_f$ within $t$ steps. This means $M$ halts if and only if $P((t + 1) \cdot L - 1)=(q_f, \wend)$. The target index $x = (t + 1) \cdot L - 1$ can be computed in binary in polynomial time. Recall that the target symbol is $b = (q_f, \wend)$. The reduction is constructed. 
	
The second claim of the theorem follows from this reduction and Lemma~\ref{Sigma2bin}.

\subsection{Proof of Theorem \ref{result2}}\label{sec:result2-proof}
	
First, we note that the problem $SRE (f,\Sigma, b)$ is in \EXP. To compute the value $P(x)$, we can iteratively compute $P(0), P(1), \dots, P(x)$. Since $x$ is given in binary, it takes exponential time in the input length.

    To prove \EXP-hardness, we reduce the Bordered $\mathrm{EXP\text{-}BHP}_M$ (Problem \ref{NewHaltingTime}) to $SRE (f,\Sigma, b)$. We use the same fixed bordered TM $M$ as in Lemma \ref{lem:bordered_pspace}, and the same $f$, $\Sigma$, and $b$ as in the proof of Theorem \ref{result}.

    The reduction is very similar to the one for \PSPACE, but now the space bound is not given directly. Instead, we know that if the machine $M$ runs for at most $n$ steps, then the space used is at most $n$. We set the space bound $S = n$. The configuration length is $L = S+2 = n+2$. The recurrence rule $f$ remains exactly the same. The shifts $L-1, L, L+1$ are now $n+1, n+2, n+3$, which are given in binary and their lengths are bounded by the length of the binary representation of $n$.

    The boundary conditions are defined by translating the initial configuration $\wbeg q_0 w \underbrace{\blankTM \dots \blankTM}_{n-|w|} \wend$ of $M$ into boundary conditions of the recurrence. 
They consist of the initial configuration padded by  the first symbol of the second configuration, which is computed in the same manner as in the proof of Theorem~\ref{result}. The total length of the boundary conditions $L+1 = n+3$.

    \begin{claim}
    [Th.~3.1 in \cite{PolyAlgoHasSmallCircuitPAPER}]
    \label{PolyAlgoHasSmallCircuit}
		There exists a constant $c \geq 1$ such that  $f \in \DTIME(t(m)) $ implies $ f \in \SIZE(t^c(m))$, where $\SIZE(f(m))$ is the set of languages decided by circuits of size no more than $f(m)$. Here $m$ is the input length.
	\end{claim}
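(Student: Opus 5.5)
The claim is the classical simulation of time-bounded Turing machines by Boolean circuits (Pippenger--Fischer style, or the cruder tableau construction, which already gives a polynomial exponent). I would prove it with the \emph{computation tableau}.

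\textbf{Setup.} Let $M$ be a one-tape deterministic TM deciding the language in time $t(m)$ on inputs of length $m$. I assume $t(m)\ge m$, so the machine can at least read its input. Within $t(m)$ steps the head visits at most $t(m)+1$ cells. Hence every configuration can be written as a row of $T=t(m)+1$ cells. I encode each cell by a symbol of the finite alphabet $\Gamma\times(Q\cup\{\varepsilon\})$, exactly the way the alphabet $\Sigma$ was built in the proof of Theorem~\ref{result}. Each such symbol is encoded by a constant number of wires.

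\textbf{Row-to-row step.} The key observation is the one already used for the function $f$ in the proof of Theorem~\ref{result}. The content of cell $j$ at time $k+1$ is determined by cells $j-1,j,j+1$ at time $k$, through a fixed local function. Since this function has a constant number of input and output bits, it is computed by a Boolean circuit of constant size $c_0$, which depends only on $M$.

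\textbf{Building the circuit.} I stack $T$ copies of a layer. Each layer consists of $T$ copies of the local gadget. The first row is hardwired from the input bits: the $m$ input wires go into the cells holding $w$, and the remaining cells receive constant blanks together with the initial state marker. Boundary cells are handled by padding with constant ``blank, no head'' symbols. To produce the output, I add an OR over all cells of the last row (or over all rows, if halting configurations are not made absorbing) of the predicate ``this cell contains the accepting state''. That is a further $O(T^2)$ gates.

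\textbf{Size bound.} The total size is $O(c_0\,T^2)=O(t(m)^2)$. Since $t(m)\ge 1$, this is at most $t(m)^c$ for a suitable constant $c$, say $c=3$, for all $m$ large enough; small $m$ are absorbed into the constant. This gives $f\in\SIZE(t^c(m))$.

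\textbf{Main obstacle.} The only delicate step is checking that the local update rule is correct at the head and at the tape boundaries. This means verifying that the three-cell window always suffices to determine the next cell. It is the same case analysis as in the Claim proved for recurrence~\eqref{reduction-rec}, and I would reuse that analysis rather than redo it. Uniformity of the circuit is not required by the statement, although the construction is clearly computable in time polynomial in $t(m)$, which is what the subsequent reduction will use.
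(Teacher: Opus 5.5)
\paragraph{Comparison with the paper.} The paper does not prove this claim. It quotes it from the literature (Theorem~3.1 of the cited reference) and uses it as a black box in the proof of Theorem~\ref{result2}. Your computation-tableau argument is the standard self-contained proof, and it is correct. Its bound, $O(t^2)$ for one-tape machines, is weaker than the best known simulations, e.g.\ $O(t\log t)$ via oblivious machines. That does not matter here, since any polynomial suffices. In return the construction is explicit, so it also gives two things the paper uses but the statement does not literally provide:
\begin{itemize}
\item the circuit $C$ is produced by a polynomial-time procedure, which is the uniformity you rightly point out;
\item multi-output functions such as $y\mapsto\tau(P(y))$ are handled directly, by reading the $\ell$ output bits off the last row.
\end{itemize}

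\paragraph{Details to tighten.}
\begin{itemize}
\item If $\DTIME$ is meant for multitape machines, first pass to a one-tape machine with quadratic slowdown. Your size becomes $O(t^4)$, so take, say, $c=5$ rather than $c=3$.
\item Padding cell $-1$ with a constant ``blank, no head'' symbol makes the head vanish if $M$ moves left from cell~$0$. Either use a special gadget for cell~$0$ (the head stays put), or assume a left-end marker as for bordered machines. The second option is exactly the setting in which the case analysis for recurrence~\eqref{reduction-rec}, which you want to reuse, is valid.
\item Absorbing small $m$ into a constant is legitimate only if $\SIZE$ is read up to constant factors, or only for sufficiently large $m$. The threshold on $m$ needed for $O(c_0t^2)\le t^3$ depends on $M$ through $c_0$.
\end{itemize}
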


    To complete the proof, we need to construct a Boolean circuit $C$ that outputs the $\tau$-encoding of the $y$-th symbol of the initial data. The circuit $C$ takes an index $y$ of length $O(\log m)$, where $m$ is the input length, and outputs the corresponding symbol encoding of length $l = |\Sigma|+8$. Since the initial configuration has a simple  structure (a word $w$ of length $|w|$ followed by $n-|w|$ blanks and three extra symbols), the $y$-th symbol can be computed in time polynomial in the input size (which is $|w| + \log n$). By Claim \ref{PolyAlgoHasSmallCircuit}, there exists a polynomial-size Boolean circuit $C$ that computes this function, and it can be constructed in polynomial time.

    We choose $t = n$. The machine $M$ halts in at most $n$ steps if and only if $P((n + 1) \cdot L - 1)=(q_f, \wend)$. The target index $x = (n + 1) \cdot L - 1$ can be computed and written in binary in polynomial time.  This completes the reduction for $SRE(f, \Sigma, b)$.

    The second part of the theorem follows from Lemma \ref{SuccSigma2bin}, which states that $SRE(f, \Sigma, b)$ is polynomially reducible to $SRE(g, \{0,1\}, 1)$ for some Boolean function $g$.

	\section{A NAND recurrence evaluation problem}
	
	In this section, we consider a recurrence relation defined on a 3-dimensional grid. Let $V: \mathbb{Z}^3 \to \{0, 1\}$ be a function defined at points with integer coordinates $(x, y, z)$. The values of the function are determined by the following recurrence: 
    \begin{equation}\label{NAND-rec}
    V(x, y, z) = 
    \lnot \big(V(x - 1, y, z)\land V(x, y - 1, z)\land V(x, y, z - 1)\big).
    \end{equation}
    A \emph{layer} of height $t$ is the set $\{(x,y,z) : x+y+z= t\}$. 
The above recurrence shows that the values of $V$ at layer $t+1$ depend on the values at layer $t$. 

In the context of impartial games, \eqref{NAND-rec} corresponds to the recurrence computing values of positions in a $3$-dimensional subtraction game with $a_1=(1,0,0)$, $a_2=(0,1,0)$ and $a_3=(0,0,1)$. The boundary conditions are assumed to be arbitrary. It means that we consider a general winning condition for the game. We refer to this game as the \emph{cone  game}. 

We define the evaluation problem for this recurrence.

    \begin{prob}[The cone game evaluation]\label{prob:eval-nand}
      Input: an integer $T > 0$ given in unary and boundary conditions $V(x,y,z) \in \{0,1\}$ for all $x,y,z \le 0$ such that $x+y+z=-T$.
      
      Output: The value $V(0, 0, 0)$ computed according to the recurrence \eqref{NAND-rec}.
    \end{prob}

To study the complexity of this problem using standard complexity classes and reductions, we define the corresponding language $\CGE \subseteq \{0,1\}^*$. The definition includes padding, which is used in the reduction below.

    \begin{defi}\label{CGE-lang}
       A word $w \in \CGE$ has the form $1^T 01^{T'} 0w'$, where $|w'| = 2^{2k}$, and $k = \lceil \log_2(T+1) \rceil$, and $T\geq T'$. The bits of $w'$ encode the boundary conditions on the layer $-T'$ as follows: for any $x, y, z \le 0$ such that $x+y+z=-T'$, the value $V(x,y,z)$ is given by the bit of $w'$ at the index $x' \cdot 2^k + y'$, where $x' = -x$ and $y' = -y$. 
      The remaining bits of $w'$ (where $x' + y' > T'$) are set to $0$. The word $w$ belongs to $\CGE$ if $V(0,0,0) = 1$ according to the recurrence \eqref{NAND-rec}.
    \end{defi}
    
Note that  $\CGE\in\Ptime$ since computing $V(0,0,0)$ from the boundary conditions on the layer $-T$ requires computing $\poly(T)$ intermediate values. We do not know whether $\CGE$ is \Ptime-complete (under logspace reductions). Below we provide some lower bounds of its complexity using the succinct version $\sCGE$. 
Recall the definition of the succinct version $sA$ for a language  $A\subseteq \{0,1\}^*$. The definition is taken from    \cite{BalcazarLozanoToran1992}. The succinct version $sA$ consists of succinct representations of words from $A$. A succinct representation of a word $w$ is a pair of Boolean circuits  that take the binary representation of an index $i$ as the input and verify that $i\leq |w|$ (the first circuit) and $w_i = 1$ (the second). Here we assume that  circuits are represented  by binary words in some appropriate way. The value of a circuit should be computed in polynomial time given the representation of the circuit and the list of input variables.

    \begin{theorem}\label{PP-complete}
       \sCGE{} is \PP-hard.
    \end{theorem}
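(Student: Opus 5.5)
The plan is to reduce a canonical \PP-complete problem to \sCGE{}: given a Boolean circuit $C$ on $n$ inputs, decide whether $\#\{u\in\{0,1\}^n : C(u)=1\}\ge 2^{n-1}$. The reduction must produce, in polynomial time, circuits that succinctly describe a word $1^T01^{T'}0w'$ with $T,T'=2^{O(n)}$. Each bit of the boundary layer $-T'$ must be computable from its index $x'\cdot 2^k+y'$ by a polynomial-size circuit, and that circuit may call $C$ on a block of index bits. The padding in Definition~\ref{CGE-lang}, with $|w'|=2^{2k}$, is what makes the index arithmetic clean: it splits into the coordinates $x'$ and $y'$, each of $k$ bits. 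The first succinct circuit, which checks $i\le |w|$, is a trivial comparator.

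The core is a geometric gadget inside the cone game. I would first analyse the recurrence \eqref{NAND-rec} on simple backgrounds, for instance periodic boundary patterns along the layer. The goal is a background on which the layer-to-layer evolution acts as a one-dimensional cellular automaton along one direction of the layer (say the direction $(1,-1,0)$), with the height $t$ playing the role of time. On such a background, local defects should behave as particles moving at constant speed.

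I would then place the $2^n$ values $C(u)$ as defects along one segment of the boundary layer, so that a one in the truth table emits a right-moving particle of type $A$. On the symmetric segment I would place $2^{n-1}$ reference defects of type $B$ moving left. I would aim to prove a \emph{collision lemma}: when an $A$-particle meets a $B$-particle, both are annihilated and the background is restored. The cone apex $(0,0,0)$ sits on the central line at a height large enough that all collisions are over. It would then see a surviving $A$-particle, a surviving $B$-particle, or clean background, and these three cases would give different values of $V(0,0,0)$.

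To make the particles travel in a single line rather than spread over the 2D layer, I would fill the rest of the layer with a constant pattern that forces $V$ to a known value. For example, long runs of zeros force ones in the next layer, which freezes whole regions. This isolates a quasi-one-dimensional channel through which the particles travel. The case ``count $=2^{n-1}$'' is handled by adding one extra $A$-particle, or by shifting the threshold by one.

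Once the collision lemma is established, correctness follows by induction on the number of collisions, using the fact that the dependence cone of $(0,0,0)$ on layer $-T'$ is exactly the triangle $x',y'\ge0$, $x'+y'\le T'$. The boundary circuit is then routine. It decides whether the index lies in the input segment, the reference segment, or the background, and it evaluates $C$ on the appropriate bits of the index in the first case. Its size is polynomial in $|C|+n$.

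I expect the main obstacle to be finding and verifying the particle and background gadget for the NAND rule. The rule is monotone-inverting and has three inputs, so an ordinary annihilating-particle automaton is not available off the shelf. My first attempt would be to compute the evolution by hand, or by small experiments, for period-$2$ and period-$3$ backgrounds. I would look for gliders whose interaction is a clean annihilation, and if annihilation fails I would settle for an interaction that leaves a detectable parity or position mark.
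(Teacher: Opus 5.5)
\textbf{There is a genuine gap.} Your proposal is a research plan rather than a proof. The whole hardness argument rests on a collision lemma: a stable background, $A$- and $B$-particles, clean annihilation, and detection at the apex. You give no construction for it, and you say yourself that finding it is the main obstacle. Beyond being unproved, the plan breaks down at two concrete points.

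\emph{The plan does not match the actual dynamics.} Under one NAND step a run of zeros produces ones, and a block of ones produces zeros. So uniform regions alternate rather than freeze. The natural time step is the two-step map $g$ of \eqref{eq:unrolled-g-six}, which is monotone (an OR of three ANDs). By Lemma~\ref{lm:EBT}, $g$ erodes every configuration of ones along its lowest diagonal by at least one unit per step. Any ``channel'' or ``background'' therefore has to be designed relative to this drift.

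\emph{The detection step fails as described.} $V(0,0,0)$ is the value at a single space-time point. In ballistic annihilation, the surviving $A$-particles are the unmatched ones farthest from the centre. They cross the central line at times determined by where the ones of the truth table of $C$ lie. At a fixed late time the apex therefore sees background whether or not an $A$-particle survived, unless survivors leave a persistent trace. Your scheme provides no such trace.

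The missing idea is a mechanism whose timing is exact and independent of the input. The paper gets this without any particles or cancellation. In $(u,v,t)$-coordinates the minimum of $u+v$ over the configuration grows by at least one per step. The target point $\phi$ is placed on the extremal diagonal $s_{\min}(I(\eta,\xi))+T_{\text{gadget}}/2$. Hence $\phi$ is reached only if the ``$+1$'' property holds at every step, i.e., only if an unbroken chain of strips and corners (a burning fuse) runs the whole time.

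The construction then works as follows:
\begin{itemize}
\item At each $r_{ij}$ a vertical strip or a corner is placed according to $\xi_{i+1}$.
\item This makes the gadget of row $i$ with $j=\sum_{m\le i+1}\xi_m$ the only one that lives four steps longer, so the fuse tracks the partial Hamming weight.
\item Horizontal strips at $r_{n,j}$ for $j\in S_\eta$ connect the fuse to the vertical strip ending at $\phi$.
\end{itemize}
This gives a $\DLOGTIME$ reduction from $\symF$, in particular from majority, to $\CGE$. $\PP$-hardness of $\sCGE$ then follows from the conversion lemma and the leaf-language characterization of succinct majority. Writing the succinct circuits by hand, as you propose, would be an acceptable substitute for that last lifting step, but only once a working counting gadget is in place.
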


To prove  Theorem~\ref{PP-complete}, we construct a very restricted reduction from the language $\symF$ to \CGE. The language $\symF$ is a collection of binary words $w= uv$ such that $|u|= |v| = n$ and $u_k=1$, where $k$ is the number of 1s in $v$. In other words, the words from $\symF$ represent all symmetric Boolean functions: the prefix $u$ specifies a function and the suffix $v$ specifies arguments of the function.
    
Required reductions are called logtime reductions.  For these reductions we need Turing machines running in time $O(\log n)$, where $n$ is the input length. Of course, the ordinary TM cannot read the whole input in that time. So, we assume random access to the input word. A machine has a special address tape. The content of the address tape is an index of a cell of the input word.  In one step the machine can get a bit indexed by the address tape or receive the overflow message. In this way such a machine can read any input bit. Nevertheless, the number of bits read is still     $O(\log n)$. A~logtime reduction of a language $L_1$ to a language $L_2$ is a function $f\colon \{0,1\}^*\to \{0,1\}^*$ such that 
\begin{itemize}
    \item [(i)]
 there exists a function $\varphi(x,i)$ computable in logarithmic time such that $\varphi(x,i) = f(x)_i$ if $i\leq |f(x)|$, and is undefined otherwise; 
 \item[(ii)] $x\in L_1$ iff $f(x)\in L_2$.
 \end{itemize}
   
Details can be found in \cite{BalcazarLozanoToran1992}. 

Our main technical result is the following.

    \begin{theorem}\label{sym2CGE}
      There exists a logtime reduction from the language $\symF$ to $\CGE$.
    \end{theorem}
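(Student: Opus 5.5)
The plan is to realize a regular, uniformly structured Boolean circuit for $\symF$ inside the cone game, placing all data on the boundary layer $-T'$. The index computation of any bit of $w'$ is then a fixed arithmetic test on $(x',y')$ plus at most one read of the input $uv$, so it can be done in logarithmic time. The padding $1^T01^{T'}0$ and the block length $2^{2k}$ only require writing $T,T'$ in unary and comparing addresses with powers of two, which a logtime machine can do, since $T$ and $T'$ will be fixed polynomials in $n=|u|$ that are computable in $O(\log n)$ time.

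\textbf{Step 1: gadgets.} I will first build a small library of gadgets inside the recurrence~\eqref{NAND-rec}. The basic observation is that if one of the three predecessors is constantly $1$, the recurrence reduces to a two-input NAND. If two of them are constantly $1$, it reduces to negation, so two steps copy a bit.
\begin{enumerate}
\item A \emph{wire} is a thin strip, of period $2$ in the layer index, that transports a bit along a coordinate direction. It is surrounded by a ``background'' region whose boundary values force the background to be a periodic all-$1$/alternating pattern that does not depend on the signals.
\item \emph{Gates} are NAND junctions where two wires meet.
\item \emph{Delays} and \emph{fan-out} gadgets complete the library.
\end{enumerate}
Since a layer is a $2$-dimensional triangle and signals move toward the apex $(0,0,0)$, the third dimension provides room for crossings. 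Alternatively, a standard XOR-based planar crossover can be built from NANDs. Each gadget is a constant-size pattern of boundary values. I will check by direct computation of a bounded number of layers that every gadget behaves correctly and that neighbouring gadgets do not interfere, by separating them with background of width $O(1)$.

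\textbf{Step 2: the circuit.} I will choose a systolic circuit for symmetric functions, so that its layout is a regular grid.
\begin{enumerate}
\item An odd--even transposition sorting network on the $n$ bits of $v$ has $n$ rounds of compare-exchange gates $(a,b)\mapsto(a\wedge b,\,a\vee b)$. Its output is the sorted word $s=1^{k}0^{n-k}$.
\item The circuit then computes $\bigvee_{j}\bigl(u_j\wedge s_j\wedge\neg s_{j+1}\bigr)$, with the conventions $s_0=1$ and $s_{n+1}=0$. This OR is computed by a chain, or a balanced tree laid out regularly, that converges to the apex.
\end{enumerate}
The bits $u_j$ and $v_j$ are injected at the boundary as the initial values of wires. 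They are delayed along the wires so that each $u_j$ arrives exactly when $s_j$ is ready. The geometry is laid out in coordinates proportional to $(j, \text{round})$ with constant-size cells. The whole construction fits into a cone of height $T'=O(n)$ or $O(n^2)$, and its output wire ends at $(0,0,0)$, with a final negation inserted if needed to match the convention $V(0,0,0)=1$.

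\textbf{Step 3: uniformity.} I will check that, given an address $x'2^k+y'$, a logtime machine can carry out the following.
\begin{enumerate}
\item It splits the address into $x'$ and $y'$, which is trivial because $2^k$ is a power of two.
\item It tests whether $x'+y'\le T'$.
\item It computes the cell index and the offset inside the cell by division by constants.
\item It outputs either a fixed background/gadget bit or the single input bit $u_j$ or $v_j$.
\end{enumerate}
Addition, comparison and division by a constant on $O(\log n)$-bit numbers are in $\DLOGTIME$.

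The main obstacle will be Step~1. Because information only flows forward toward the shrinking apex, the background must be self-sustaining and must not corrupt the signals. The wires must also be routed so that all of them converge into the cone, while the required delays keep matching parities. I will first try to find a background pattern (for instance $V\equiv1$ away from the signals, which is consistent wherever some predecessor is $0$) and an explicit wire of width $1$ in a diagonal direction, verified by hand. The remaining gadgets will then be assembled from these two.
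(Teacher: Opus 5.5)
Step~1 of your plan is where the whole theorem lives, and the mechanisms you sketch for it do not exist in this recurrence.

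\begin{itemize}
\item \emph{Background.} $V\equiv1$ is not self-consistent. A point whose three predecessors are $1$ gets $0$, so an all-$1$ region turns all-$0$ on the next layer, and the natural background alternates between layers. Whenever the background layer is $0$, every cell of the next layer with a background predecessor is forced to $1$ whatever the signal. So your ``one predecessor constantly $1$'' reduction to NAND never holds on two consecutive layers.
\item \emph{Wires.} Over two layers the map is the monotone $g$. Lemma~\ref{lm:EBT}, and its dual for $0$-patterns on a $1$-background, confines every finite deviation from the background to its initial bounds $u\le u_{\max}$, $v\le v_{\max}$, while its lowest diagonal advances by at least one per step. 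Hence no constant-size pattern can carry a bit along a wire. A wire must be a strip written in full on layer $-T'$, and what propagates is the front consuming it. The information is a timing signal, which is exactly the paper's ``$+1$'' property.
\item \emph{Monotonicity.} Every path from layer $-T'$ to the apex has length exactly $T'$. So $V(0,0,0)$ is monotone or antimonotone in the boundary bits, and every gadget computes a monotone function of its incoming signals. There is no negation gadget, no ``final negation if needed'', and no XOR-based crossover. The bits $\neg s_{j+1}$ can only come from also writing $\neg v_j$ on the boundary.
\item \emph{Crossings.} The third dimension of the cone is time, and each layer is a two-dimensional triangle. A crossing therefore needs a genuine crossing gadget, which you do not supply. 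The odd--even transposition network, laid out directly, needs one at every compare-exchange: both inputs feed both the AND and the OR gate, while the two outputs must keep their order.
\end{itemize}

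As a sanity check, a library of wires, gates, fan-out and crossings with a uniform layout would give a logspace reduction of the circuit value problem to $\CGE$. That is $\Ptime$-completeness of $\CGE$, which the paper explicitly leaves open.

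The missing idea is to stop simulating gates and compute with the front itself.

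\begin{itemize}
\item \emph{Survival criterion.} The paper writes $I(\eta,\xi)$ as strips and corners (a $1$-pattern on a $0$-background). The point $\phi$ survives to time $T_{\text{gadget}}/2$ iff the minimum coordinate sum grows by exactly one at every step, i.e., iff an unbroken chain of gadgets runs from $\alpha$ to $\phi$.
\item \emph{Counting grid.} This chain runs through a planar layered grid that acts as a counting branching program. In row $i$, every gadget $r_{ij}$ is a vertical strip (shifting the chain from $j$ to $j+1$) if $\xi_{i+1}=1$, and a corner (keeping $j$) otherwise. Every row takes exactly six steps, so no delays are needed.
\item \emph{Selection.} After the $n$ rows, the chain reaches $r_{n,|\xi|}$. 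Horizontal strips at $r_{n,j}$, $j\in S_\eta$, attach it to the vertical strip ending at $\phi$.
\item \emph{Fan-out by the reduction.} The copying of $\xi_{i+1}$ to a whole row is done by the logtime reduction when it writes the boundary layer, not by the dynamics. So no fan-out, crossing or negation is ever needed.
\end{itemize}

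Your Step~3 matches the paper's uniformity argument. It only becomes meaningful once Step~1 is replaced by a construction of this kind.
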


       Since the majority function is symmetric,   the conversion lemma \cite{BalcazarLozanoToran1992} and the
        connection between succinct circuit representations and leaf
        languages \cite{BorchertLozano1996} imply that \sCGE{} is \PP-hard.

\begin{rem}
    By the same reasons, since the parity function is symmetric, Theorem~\ref{sym2CGE} implies $\plusP$-hardness of \sCGE.
\end{rem}
    
	\subsection{Layer-to-Layer Mappings}

     The value $V(x,y,z)$ at a point from layer $t+2$ is determined by the values of $V(\cdot)$ at six points from layer $t$, namely 
\[
(x-2, y, z);\ (x-1,y-1, z);\ (x, y-2,z);\ 
(x, y-1, z-1); \ (x,y,z-2);\ (x-1,y,z-1).
\]   
     This dependence is defined  by a function $g: \{0,1\}^6 \to \{0,1\}$. 
    Note that solving recurrence \eqref{NAND-rec} is reduced to the computation of values at the preceding even level $z = 2t$ with constant computational overhead.
Therefore we consider the recurrence defined by the function $g$ and will analyze the corresponding evaluation problem.

\begin{figure}[H]
	\centering
	\begin{tikzpicture}[
		xscale=2.5, yscale=0.6, 
		>=stealth,
		every node/.style={font=\small},
		p/.style={circle, draw=black, inner sep=1.2pt},
		mid/.style={circle, draw=black, fill=white, inner sep=1.2pt},
		top/.style={circle, draw=black, fill=white, inner sep=1.4pt},
		deparr/.style={->, thick},
		grayarr/.style={->, thick, gray!70},
		layerline/.style={gray!60, dashed, thick}
	]
		
		\draw[layerline] (-0.8,0) -- (4.8,0) node[right] {layer $t$};
		\draw[layerline] (-0.8,4.4) -- (4.8,4.4) node[right] {layer $t+1$};
		\draw[layerline] (-0.8,7.6) -- (4.8,7.6) node[right] {layer $t+2$};
		
		\coordinate (BL) at (0,0.3);   % (x-2,y,z)
		\coordinate (BR) at (4,0.3);   % (x,y-2,z)
		\coordinate (BT) at (2,3.1);   % (x,y,z-2)
		
		\coordinate (Bmid) at (2,0.3); % (x-1,y-1,z)
		\coordinate (Lmid) at (1,1.7); % (x-1,y,z-1)
		\coordinate (Rmid) at (3,1.7); % (x,y-1,z-1)
		
		\draw[gray!40] (BL) -- (BR) -- (BT) -- cycle;
		\draw[gray!40] (BT) -- (Bmid);
		\draw[gray!40] (BL) -- (Rmid);
		\draw[gray!40] (BR) -- (Lmid);
		
		\node[p, fill=black]   (P1) at (BL) {};
		\node[below left=1pt, black] at (P1) {$(x\!-\!2,\,y,\,z)$};
		
		\node[p, fill=black]  (P3) at (BR) {};
		\node[below right=1pt, black] at (P3) {$(x,\,y\!-\!2,\,z)$};
		
		\node[p, fill=black]     (P6) at (BT) {};
		\node[above=2pt] at (P6) {$(x,\,y,\,z\!-\!2)$};
		
		\node[p, fill=black]    (P2) at (Bmid) {};
		\node[below=2pt, black] at (P2) {$(x\!-\!1,\,y\!-\!1,\,z)$};
		
		\node[p, fill=black] (P5) at (Lmid) {};
		\node[left=2pt, black] at (P5) {$(x\!-\!1,\,y,\,z\!-\!1)$};
		
		\node[p, fill=black]    (P4) at (Rmid) {};
		\node[right=2pt, black] at (P4) {$(x,\,y\!-\!1,\,z\!-\!1)$};
		
		\coordinate (ML) at (1,4.8); 
		\coordinate (MR) at (3,4.8); 
		\coordinate (MT) at (2,6.2); 
		
		\draw[gray!40] (ML) -- (MR) -- (MT) -- cycle;
		
		\node[mid] (M1) at (ML) {};
		\node[left=2pt] at (M1) {$(x\!-\!1,\,y,\,z)$};
		
		\node[mid] (M2) at (MR) {};
		\node[right=2pt] at (M2) {$(x,\,y\!-\!1,\,z)$};
		
		\node[mid] (M3) at (MT) {};
		\node[above=2pt] at (M3) {$(x,\,y,\,z\!-\!1)$};
		
		\node[top, very thick] (T) at (2,8.2) {};
		\node[above=2pt] at (T) {$(x,\,y,\,z)$};
		
		\draw[deparr, black]   (P1) -- (M1);
		\draw[deparr, black]    (P2) -- (M1);
		\draw[deparr, black] (P5) -- (M1);
		
		\draw[deparr, black]    (P2) -- (M2);
		\draw[deparr, black]  (P3) -- (M2);
		\draw[deparr, black]    (P4) -- (M2);
		
		\draw[deparr, black] (P5) -- (M3);
		\draw[deparr, black]    (P4) -- (M3);
		\draw[deparr, black]     (P6) -- (M3);
		
		\draw[deparr, black] (M1) -- (T);
		\draw[deparr, black] (M2) -- (T);
		\draw[deparr, black] (M3) -- (T);
	
	\end{tikzpicture}
	\caption{Unrolling \eqref{NAND-rec} for two steps: the value at \((x,y,z)\) on layer \(t+2\) depends on six points on layer \(t\). These six points form a triangle of side \(2\), which maps to a triangle of side \(1\) on layer \(t+1\), and then to the target point on layer \(t+2\).}
	\label{fig:dependency-six-points}
\end{figure}

As illustrated in Figure~\ref{fig:dependency-six-points}, the recurrence \eqref{NAND-rec} can be unrolled for two steps and written as a monotone Boolean function of the six points on the lower layer:
\[
\begin{aligned}
a&:=V(x-2,y,z),&
b&:=V(x-1,y-1,z),&
c&:=V(x,y-2,z),\\
d&:=V(x,y-1,z-1),&
e&:=V(x-1,y,z-1),&
f&:=V(x,y,z-2).
\end{aligned}
\]
Thus,
\begin{equation}\label{eq:unrolled-g-six}
V(x,y,z)=g(a,b,c,d,e,f) = (a\wedge b\wedge e)\ \vee\ (b\wedge c\wedge d)\ \vee\ (e\wedge d\wedge f).
\end{equation}

This formula is exactly what Figure~\ref{fig:dependency-six-points} depicts.
The six points \(\{a,b,c,d,e,f\}\) form the lower triangle (layer \(t\)).
Inside it there are three ``small'' triples:
\[
\{a,b,e\},\qquad \{b,c,d\},\qquad \{e,d,f\},
\]
each triple feeding one of the three vertices of the middle triangle (layer \(t+1\)) via one application of \eqref{NAND-rec}.
Then the top value \(V(x,y,z)\) (layer \(t+2\)) is obtained by applying \eqref{NAND-rec} once more to these three middle vertices.
After substitution, the negations cancel by De~Morgan's laws, yielding \eqref{eq:unrolled-g-six} as an \(OR\) of three \(AND\)'s.

\medskip
\paragraph*{Rewriting in \((u,v,t)\)-coordinates.} In the analysis of \eqref{eq:unrolled-g-six}
we use the new coordinates
\[
u=-x+t,\qquad v=-y+t,\qquad 2t=x+y+z.
\]
If \((x,y,z)\) is a point on layer \(2t+2\), then the new coordinates are \((u,v,t+1)\), where  \(u=-x+(t+1)\), \(v=-y+(t+1)\).
The six lower-layer points (on layer \(t\)) from \eqref{eq:unrolled-g-six} correspond to the following new coordinates:
\[
\begin{aligned}
&(x-2,y,z )     \leftrightarrow (u+1,\,v-1,\,t), & &&\hspace*{2em}
&(x-1,y-1,z)    \leftrightarrow (u,\,v,\,t),\\
&(x,y-2,z )\leftrightarrow (u-1,\,v+1,\,t), & &&\hspace*{2em}
&(x,y-1,z-1)    \leftrightarrow (u-1,\,v,\,t),\\
&(x-1,y,z-1)    \leftrightarrow (u,\,v-1,\,t), & &&\hspace*{2em}
&(x,y,z-2)      \leftrightarrow(u-1,\,v-1,\,t).
\end{aligned}
\]
Substituting these into \eqref{eq:unrolled-g-six}, we obtain an explicit recurrence in \((u,v,t)\) coordinates:
\begin{equation}\label{tranfsormation}
\begin{aligned}
V(u,v,t+1)=&\ \big(V(u+1,v-1,t)\wedge V(u,v,t)\wedge V(u,v-1,t)\big)\\
&\ \vee\ \big(V(u,v,t)\wedge V(u-1,v+1,t)\wedge V(u-1,v,t)\big)\\
&\ \vee\ \big(V(u,v-1,t)\wedge V(u-1,v,t)\wedge V(u-1,v-1,t)\big). 
\end{aligned}
\end{equation}

Since this change of coordinates is invertible, solving recurrence \eqref{eq:unrolled-g-six} is equivalent to solving recurrence \eqref{tranfsormation}.  

The transformation \eqref{tranfsormation}  describes an evolution of configurations $U_t = \{(u,v): V(u,v,t)=1\}$ over time $t$. The corresponding map on the set of configurations is denoted by $\tau $.
In \eqref{tranfsormation}, each term of the disjunction is a conjunction of three values  (a ``small triangle'' on the previous even layer). Therefore, a point of configuration $U_{t+1}=\tau (U_t)$ is generated by  \emph{at least one} of these small triangles.
Due to monotonicity, each ``small triangle'' of $U_t$ generates another triangle in $U_{t+1}$ as shown in Fig.~\ref{fig:automaton-evolution}. Note that left bottom corner disappears and the opposite (right top) corner appears.

\begin{figure}[H]
	\centering 
	\begin{tikzpicture}[
		>=stealth,
		every node/.style={font=\small},
		axis/.style={->, gray!70, thick},
		sq/.style={gray!50, thick},
		old/.style={draw=black, very thick},
		new/.style={text=black, font=\bfseries\large},
	]
		\draw[axis] (-0.4,0) -- (2.9,0) node[right] {$u$};
		\draw[axis] (0,-0.4) -- (0,2.9) node[above] {$v$};

		\coordinate (BL) at (0,0);
		\coordinate (BR) at (2,0);
		\coordinate (TL) at (0,2);
		\coordinate (TR) at (2,2);
		\draw[sq] (BL) -- (BR) -- (TR) -- (TL) -- cycle;

		% Time t
		\draw[old] (BL) circle (4pt);
		\draw[old] (BR) circle (4pt);
		\draw[old] (TL) circle (4pt);
		\node[black, anchor=west] at (2.25,0.45) {time $t$ (circles)};

		% Time t+2
		\node[new] at (BR) {$\times$};
		\node[new] at (TL) {$\times$};
		\node[new] at (TR) {$\times$};
		\node[black, anchor=west] at (2.25,1.55) {time $t+1$ (crosses)};
	\end{tikzpicture}
	\caption{Evolution of a triangle in configuration $U_t$ (circles) to a triangle in configuration $U_{t + 1}$ (crosses). }
	\label{fig:automaton-evolution}
\end{figure}

\subsection{Evolution over a period of time}

\begin{lemma}[Evolution of Bounding Triangle]\label{lm:EBT}
		Assume that the configuration $U_t$ is contained in a triangle defined by inequalities $u \le u_{max}$, $v \le v_{max}$, and $u + v \ge s_{min}$. 
		
		Then $U_{t+1}$ is  contained  in a triangle $u \le u_{max}$, $v \le v_{max}$, and $u + v \ge s_{min}+1$. 
        
        In other words, the lower boundary of the configuration moves: the minimum coordinate sum of any  element of $U_{t+1}$ is strictly greater than $s_{min}$.
	\end{lemma}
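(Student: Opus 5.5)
Since $\tau$ is defined pointwise by \eqref{tranfsormation}, the plan is to take an arbitrary point $(u,v)\in U_{t+1}$ and verify the three inequalities directly. A point lies in $U_{t+1}$ only if at least one of the three conjunctions in \eqref{tranfsormation} holds, so there is a ``small triangle'' of three points of $U_t$ generating $(u,v)$. I will split into the three cases given by the three terms. In each case I pick, for each inequality, a suitable point of the triangle lying in $U_t$ and apply the hypothesis on $U_t$ to it.

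For the bound on $u$, we need $u\le u_{max}$. The first term contains $V(u,v,t)$, the second term also contains $V(u,v,t)$, and the third contains $V(u,v-1,t)$. Each of these has first coordinate $u$, so $u\le u_{max}$ in every case. For the bound on $v$, the first term contains $(u,v,t)$, the second contains $(u,v,t)$, and the third contains $(u-1,v,t)$. Each has second coordinate $v$, so $v\le v_{max}$.

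For the lower bound on $u+v$, we need $u+v\ge s_{min}+1$. In the first term, $(u+1,v-1)\in U_t$ only gives $u+v\ge s_{min}$, so instead I use $(u,v)\in U_t$ or $(u,v-1)\in U_t$. The point $(u,v-1)$ gives $u+v-1\ge s_{min}$, which is exactly what is needed. In the second term, $(u-1,v)\in U_t$ gives $u+v-1\ge s_{min}$. In the third term, $(u-1,v-1)\in U_t$ gives the stronger $u+v\ge s_{min}+2$. So every term contains a point whose coordinate sum is at most $u+v-1$, which yields the claim.

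The only thing to get right is this last inequality: for each term one must choose the point of minimal coordinate sum rather than an arbitrary one. Once \eqref{tranfsormation} is read correctly this is routine, so I expect no real obstacle beyond checking the coordinate bookkeeping against the table of correspondences.
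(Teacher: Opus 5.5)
Your proof is correct and is essentially the paper's argument: the paper reads the three inequalities off the small triangle $\{(a,b),(a+1,b),(a,b+1)\}\subseteq U_t$ that generates each point of $U_{t+1}$, and you do the same by checking the three disjuncts of \eqref{tranfsormation} one at a time, which is more explicit than the paper's sketch. One small slip: in the first term only $(u,v-1)$, not $(u,v)$, gives $u+v\ge s_{min}+1$, though you do use the right point in the end.
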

	
	\begin{proof}
		From \eqref{tranfsormation} we conclude that  a non-zero value  $V(u,v,t+1)$ is possible if configuration $U_t$ contains points $(u', v')$ where $u' \ge u$ and $u' \ge u$ and at least one coordinate is strictly greater (see Fig.~\ref{fig:automaton-evolution}). The last condition implies that the minimum sum of coordinates strictly increases in $U_{t+1}$, while the first and the second conditions imply that points of $U_{t+1}$ satisfy the same inequalities     $u \le u_{max}$, $v \le v_{max}$. See Fig.~\ref{pic:triangle-evolution} for illustration.\qedhere

		\begin{figure}[H]
			\centering
			\begin{tikzpicture}
				
				\coordinate (A) at (4,0);
				\coordinate (B) at (4,4);
				\coordinate (C) at (0,4);
				
				\coordinate (M) at (2,2);
				\coordinate (M_new) at (2.4, 2.4);
				
				\draw[thick] (A) -- (B) -- (C) -- cycle;
				
				\node at (2,4.3) {$v \le v_{max}$};
				\node at (5,2.5) {$u \le u_{max}$};
				\node at (4,-0.3) {$u + v \ge s_{min}$};
				
				\fill[gray!20] (A) -- (B) -- (C) -- cycle;
				\node at (2.8, 3.2) {$U_t$};
				
				\coordinate (A2) at (4,0.5);
				\coordinate (C2) at (0.5,4);
				\draw[dashed, thick] (B) -- (C2) -- (A2) -- cycle;
				\node[rotate=-45] at (3.05, 1.95) {$U_{t + 1}$};
				
				\draw[fill=white] (M) circle (3pt);
				\node[below left] at (M) {$s_{min}$};
				
				\draw[fill=black] (2.25, 2.25) circle (2pt);
				\draw[fill=black] (2.65, 1.85) circle (2pt);
				
			\end{tikzpicture}
			\caption{Evolution of the bounding triangle. The configuration shrinks and  the diagonal $u+v = s_{\min}$ is empty in the new configuration.}\label{pic:triangle-evolution}
		\end{figure}
	\end{proof}

\begin{rem}
    Applying Lemma~\ref{lm:EBT} subsequently, we see that a bounded configuration disappears after at most $(u_{max}+v_{max}) -s_{\min}+1 $ time steps. 
\end{rem}

\begin{defi}[The ``+1'' property]
    A  configuration $U$  satisfies the ``+1'' property during a   period of time $\Delta t$ if in a sequence 
    of configurations
    \[
    U= U_t \xrightarrow{\tau} U_{t+1}\xrightarrow{\tau} \dots\xrightarrow{\tau} U_{t+\Delta t}
    \]
    the minimum sum of coordinates of configuration points increases by exactly one over each step.  
\end{defi}

\subsection{Special Configurations: Strips and Corners}

Let us consider configurations of a special form: a \emph{vertical strip}, a \emph{horizontal strip} and a \emph{corner} (which connects a vertical and a horizontal strip). These \emph{gadgets} serve as the building blocks for our  construction.

\begin{figure}[H]
    \centering
    \begin{tikzpicture}[scale=0.5, every node/.style={font=\small}]
        \tikzstyle{grid}=[gray!30, very thin]
        \tikzstyle{cell}=[circle, fill=black, inner sep=1.5pt]
        
        % --- Vertical Strip ---
        \begin{scope}[shift={(0,0)}]
            \draw[grid] (-1,-1) grid (3,5);

            \foreach \y in {1,2,3,4} {
                \node[cell] at (0,\y) {};
            }
            \foreach \y in {1,2,3,4} {
                \node[cell] at (1,\y) {};
            }
            \node[circle, fill=white, draw=black, inner sep=1.5pt] (graynode_v) at (1,0) {};
            \draw[thick] (graynode_v.north west) -- (graynode_v.south east);
            \draw[thick] (graynode_v.north east) -- (graynode_v.south west);
            \node[below] at (1,-1.2) {Vertical strip};
            \draw[->, thick, black, dashed] (0.5, -0.2) -- (0.5, 4.2);
        \end{scope}

        % --- Horizontal Strip ---
        \begin{scope}[shift={(6,0)}]
            \draw[grid] (-1,-1) grid (5,3);
            \node[cell] at (0,1) {};
            \node[circle, fill=white, draw=black, inner sep=1.5pt] (graynode) at (1,0) {};
            \draw[thick] (graynode.north west) -- (graynode.south east);
            \draw[thick] (graynode.north east) -- (graynode.south west);
            \foreach \x in {1,2,3,4} { 
                \node[cell] at (\x,1) {};
            }
            \foreach \x in {2,3,4} { 
                \node[cell] at (\x,0) {};
            }
            \node[below] at (2,-1.2) {Horizontal strip};
            \draw[->, thick, black, dashed] (-0.2, 0.5) -- (4.2, 0.5);
        \end{scope}

        % --- Corner ---
        \begin{scope}[shift={(13,0)}]
            \draw[grid] (-1,-1) grid (5,5);
            \node[cell] at (0,1) {};
            \node[cell] at (1,1) {};
            \node[circle, fill=white, draw=black, inner sep=1.5pt] (graynode_c) at (1,0) {};
            \draw[thick] (graynode_c.north west) -- (graynode_c.south east);
            \draw[thick] (graynode_c.north east) -- (graynode_c.south west);
            \foreach \x in {0,1,2,3,4} \foreach \y in {2,3} {
                \node[cell] at (\x,\y) {};
            }
            \node[below] at (2,-1.2) {Corner};
            
            \draw[->, thick, black, dashed] (0.5, -0.5) -- (0.5, 2.5) -- (4.5, 2.5);
        \end{scope}
    \end{tikzpicture}
    \caption{The three gadgets: a vertical strip (left), a horizontal strip (center), and a corner connecting vertical and horizontal strips. Points marked with a cross indicate reference points of the gadgets.}
    \label{fig:special_configs}
\end{figure}

Informally,  evolution of these configurations can be viewed as a signal propagation:

\begin{itemize}
    \item A \textbf{vertical strip} (of width 2)  propagates a signal vertically.
    \item A \textbf{horizontal strip} (of width 2) propagates a signal horizontally.
    \item A \textbf{corner} connects a vertical strip and a horizontal strip, allowing a signal to turn.
\end{itemize}

The \emph{height} of a configuration $U$ is the maximal difference in $v$-coordinates of its points, i.e.,
$\max(v_2-v_1 : (u_1, v_1)\in U; (u_2,v_2)\in U)$.
Similarly, the \emph{width} of a 
configuration $U$ is the maximal difference in $u$-coordinates of its points, i.e.,
$\max(u_2-u_1 : (u_1, v_1)\in U; (u_2,v_2)\in U)$.

	\begin{lemma}[Evolution of strips]\label{lem:strip-stability} 
		Applying transformation \eqref{tranfsormation} to a vertical strip of width $1$ and height $h>1$, gives a vertical strip of height $h-1$ and preserves $u$-coordinate of the reference point.
		
		Similarly, applying transformation \eqref{tranfsormation} to a horizontal strip of height $1$ and  width $w>1$,  gives a horizontal strip of width $w-1$ and preserves $v$-coordinate of the reference point. 
	\end{lemma}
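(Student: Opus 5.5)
The proof is a direct computation with the transformation \eqref{tranfsormation}. I would first rewrite \eqref{tranfsormation} as a membership rule. A point $(u,v)$ belongs to $U_{t+1}=\tau(U_t)$ if and only if $U_t$ contains one of the three small triangles
\[
T_1=\{(u+1,v-1),(u,v),(u,v-1)\},\quad
T_2=\{(u,v),(u-1,v+1),(u-1,v)\},\quad
T_3=\{(u,v-1),(u-1,v),(u-1,v-1)\}.
\]
Each $T_i$ contains two points in a common column and two points in a common row. So for a configuration that fits in two adjacent columns (a vertical strip of width~$1$), every generating triangle must straddle exactly those two columns. This determines the $u$-coordinate of each new point in terms of which triangle produced it.

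For the vertical strip, write it (following Fig.~\ref{fig:special_configs}) as two adjacent columns $u_0$ and $u_0+1$, each occupied on a prescribed interval of $v$, with the reference point just below the right column. I would treat the three triangle types separately.
\begin{itemize}
\item $T_1$ produces points in column $u_0$, for those $v$ with $v-1$ and $v$ occupied in column $u_0$ and $v-1$ occupied in column $u_0+1$.
\item $T_2$ and $T_3$ produce points in column $u_0+1$, for the corresponding ranges of $v$.
\end{itemize}
Taking the union of the resulting $v$-intervals column by column, I expect the new configuration to again have the strip shape: two columns at the same $u$-positions $u_0,u_0+1$, with the bottom of each column raised by one and the top unchanged, so the height drops by one. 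By Lemma~\ref{lm:EBT}, nothing can appear outside $u\le u_0+1$, $v\le v_{\max}$, and the minimal coordinate sum grows by exactly one. Since the columns are unchanged, the $u$-coordinate of the reference point (the point under the right column) is preserved. The hypothesis $h>1$ is what guarantees that the intervals stay nonempty and that the column offsets in the staircase pattern survive. I would check the boundary values of $v$ (the lowest and highest rows) explicitly, because this is where the shape could degenerate.

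For the horizontal strip, I would use the symmetry of \eqref{tranfsormation} under swapping $u\leftrightarrow v$. This swap exchanges $T_1$ and $T_2$ and maps $T_3$ to itself, so it conjugates $\tau$ to itself. If the horizontal strip of Fig.~\ref{fig:special_configs} is the mirror image of the vertical one, with reference points corresponding, the second claim follows immediately: width decreases by one and the $v$-coordinate of the reference point is preserved. If the pictured gadgets are not exact mirror images (the figure shows a staircase lower row for the horizontal strip), I would instead repeat the same three-case computation for rows $v_0,v_0+1$.

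The main obstacle is bookkeeping rather than an idea: the precise gadget shapes are given only by the figure. The care goes into fixing the exact occupied intervals in each column (or row) relative to the reference point, and verifying at the two ends that the union of the $T_i$-generated intervals reproduces exactly the same staircase with one fewer cell, and not a shape shifted by one.
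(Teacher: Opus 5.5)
Your plan is correct and is essentially the paper's argument made explicit. The paper simply observes, by comparing the small-triangle rule of Fig.~\ref{fig:automaton-evolution} with the gadget shapes, that exactly the points of minimum coordinate sum disappear and all others survive; your column-by-column union of the $T_1,T_2,T_3$ intervals, together with the $u\leftrightarrow v$ symmetry for the horizontal case, verifies this. When you fix the shape, take the crossed reference point as part of the strip, so the right column starts one cell below the left one (and, symmetrically, the upper row of the horizontal strip starts one cell further left); for a plain two-column rectangle the first step deletes only the bottom-left cell and the height does not drop.
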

\begin{proof}
Comparing Figs.~\ref{fig:automaton-evolution}
and~\ref{fig:special_configs} we conclude that the points with minimum sum of coordinates disappear, while all other points are preserved in the next configuration.
\end{proof}

	\begin{lemma}
    \label{lem:plus-one-conservation}
     The ``$+1$'' property holds for
     \begin{itemize}
         \item a vertical strip of height $h$
          during the period of time $\Delta t = h-1$;
          \item  a horizontal strip of width $w$  during the period of time $\Delta t = w-1$;
          \item a corner of height $h$ and width $w$ during the period of time $\Delta t = h+w-2$.
     \end{itemize}
 	\end{lemma}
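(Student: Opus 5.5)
Strips and the corner are handled by induction on elapsed time. At each step I identify the configuration exactly, up to a shrinking parameter, and then read off its minimum coordinate sum.

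For a vertical strip of height $h$, Lemma~\ref{lem:strip-stability} applied repeatedly shows that after $k$ steps, $0\le k\le h-1$, the configuration is again a vertical strip. It has height $h-k$ and the same $u$-coordinate of the reference point. Its lowest row moves up by one at each step. So if the bottom point has coordinates $(u_0,v_0+k)$ at time $t+k$, the minimum coordinate sum is $u_0+v_0+k$, which grows by exactly one per step. The induction stops at $k=h-1$, where the strip still has height $1$ and is nonempty, so the minimum is defined at every time in the period. The horizontal strip is the symmetric case: exchange the roles of $u$ and $v$ and use the second part of Lemma~\ref{lem:strip-stability}, giving $\Delta t=w-1$. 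Lemma~\ref{lm:EBT} already gives that the minimum sum increases by \emph{at least} one. The strip lemma supplies the matching upper bound, because the new bottom point is present in $U_{t+1}$ and has sum exactly one more.

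For the corner of height $h$ and width $w$, I check directly with \eqref{tranfsormation}, comparing with Fig.~\ref{fig:automaton-evolution}, that the vertical arm evolves exactly like a vertical strip. As long as the arm has length at least $2$, the points of minimum sum lie at the bottom of the vertical arm. They are far from the turning region, so the local rule there sees only the strip pattern, and one step gives a corner with vertical part shortened by one. This accounts for $h-1$ steps with the ``+1'' property. After that the configuration is, essentially, a horizontal strip of width $w$. Its leftmost column has been absorbed into the turning region. Applying the horizontal case contributes $w-1$ further steps, for a total of $h+w-2$.

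The main obstacle is the transition step, where the vertical arm is exhausted and the corner region turns into the start of the horizontal strip. I would handle it by explicitly listing the few points near the reference cross in Fig.~\ref{fig:special_configs}. I would apply the three-triangle rule \eqref{tranfsormation} to each candidate point at the two or three critical time steps. The check has two parts: that the only point of minimum sum vanishes, and that the new minimum-sum point is generated by a full small triangle, so the minimum does not jump by two. Throughout, Lemma~\ref{lm:EBT} supplies the lower bound ``increase $\ge1$''. The remaining work in every case is to exhibit a surviving point whose sum is exactly the old minimum plus one, which follows because each relevant point sits at the top-right corner of a small triangle fully contained in the current configuration.
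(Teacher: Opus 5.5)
Your proposal is correct and follows the paper's route. The strips are handled by iterating Lemma~\ref{lem:strip-stability}, with Lemma~\ref{lm:EBT} supplying the lower bound. The corner is treated as $h-1$ steps of a shrinking vertical arm followed by $w-1$ steps of a horizontal strip. The paper compresses the transition into the remark that a width-$1$ horizontal strip is also a height-$1$ vertical strip.

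One small correction to your last sentence: a surviving point of sum $s_{\min}+1$ can never be the newly created top-right corner $(a+1,b+1)$ of a small triangle, since that point has sum at least $s_{\min}+2$. Instead it is the vertex $(a+1,b)$ or $(a,b+1)$ of a small triangle in $U_t$ whose bottom-left vertex $(a,b)$ lies on the diagonal $u+v=s_{\min}$.
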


	\begin{proof} Follows from Lemma~\ref{lem:strip-stability} and the following observation. Note that a horizontal strip of width $1$ is also a vertical strip of height $1$. This configuration disappears at one step.  The minimum of sum of coordinates increases by $1$ during this step. 
	\end{proof}

    \subsection{Independent Configurations}

    Configurations $U'$ and $U''$ are called \emph{independent} if $\tau(U'\cup U'') =\tau (U') \cup \tau (U'') $. 
    
    First, we note a causality property: points of the configuration with  $u+v > b$ do not affect the change of the configuration at points with $u+v \leq b$. 
    
    We will use the following sufficient condition of independence within a finite strip $a \leq u+v \leq b$.
    
    A \emph{monotone path} is a sequence of points on the plane where each next point differs from the previous one by increasing exactly one coordinate by 1.

    	\begin{lemma}[Independence in a finite strip]\label{lem:independence}
        Consider a finite strip $S = \{(u,v): a \leq u+v \leq b\}$.
            If $U'$ and $U''$ are subsets of $S$
        separated by a monotone path within the strip, then $U'$ and $U''$ are independent.
    	\end{lemma}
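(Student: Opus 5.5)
The plan is a local argument on the small triangles that generate points of $\tau$. Rewrite \eqref{tranfsormation} as
\[
(u,v)\in\tau(U)\iff \exists\,p\in\{(u,v-1),\,(u-1,v),\,(u-1,v-1)\}\ \text{such that}\ \{p,\ p+e_1,\ p+e_2\}\subseteq U,
\]
where $e_1=(1,0)$ and $e_2=(0,1)$. Indeed, the three conjunctions in \eqref{tranfsormation} are exactly the triples $\{p,p+e_1,p+e_2\}$ for $p=(u,v-1)$, $p=(u-1,v)$ and $p=(u-1,v-1)$, i.e.\ the small triangles of Fig.~\ref{fig:automaton-evolution}. Since $\tau$ is monotone, the inclusion $\tau(U')\cup\tau(U'')\subseteq\tau(U'\cup U'')$ is immediate. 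So the whole content is the reverse inclusion: I will show that every small triangle contained in $U'\cup U''$ lies entirely in $U'$ or entirely in $U''$.

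First I will fix what ``separated by a monotone path'' means. The path $\gamma$ consists of lattice points of the strip belonging to neither $U'$ nor $U''$. Joining consecutive points of $\gamma$ by unit segments gives a polyline that crosses the strip from the line $u+v=a$ to the line $u+v=b$. The sets $U'$ and $U''$ lie in different components of the strip minus this polyline.

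The key geometric fact is the following. Let $q\in U'$ and $q'\in U''$ be lattice neighbours, so the unit segment $[q,q']$ joins points on opposite sides. Then this segment must meet the polyline $\gamma$. Every segment of $\gamma$ is a unit lattice segment, so two unit lattice segments can meet only at a common lattice endpoint or by coinciding. Hence $q$ or $q'$ would lie on $\gamma$, contradicting $\gamma\cap(U'\cup U'')=\varnothing$. Thus no point of $U'$ is a lattice neighbour of a point of $U''$.

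Now take a triangle $\{p,p+e_1,p+e_2\}\subseteq U'\cup U''$. The point $p$ is a lattice neighbour of both $p+e_1$ and $p+e_2$. By the key fact, both therefore lie in the same set as $p$, so the whole triangle lies in $U'$ or in $U''$. Consequently every point of $\tau(U'\cup U'')$ is generated inside $U'$ or inside $U''$, which gives $\tau(U'\cup U'')\subseteq\tau(U')\cup\tau(U'')$.

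The main obstacle is making the separation notion precise and checking the crossing claim near the ends of the path, i.e.\ on the lines $u+v=a$ and $u+v=b$. Points of the generating triangles can stick out of the strip, below $u+v=a$. There $U'\cup U''$ is empty by assumption, so such triangles contribute nothing. Points above $b$ do not matter either, by the causality remark made before the lemma. If needed, I will also extend $\gamma$ monotonically beyond both ends, so that it separates the whole plane, before applying the crossing argument.
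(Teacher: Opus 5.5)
Your proof is correct, but it takes a different route from the paper. The paper works with the \emph{images}. It uses the property that a point $(u,v)\in\tau(U)$ forces $U$ to meet both $\{(u-1,v),(u,v)\}$ and $\{(u,v-1),(u,v)\}$. Combined with monotonicity of the path, this shows $\tau(U')\subseteq L'$ and $\tau(U'')\subseteq L''$, where $L'\supseteq U'$ and $L''\supseteq U''$ are the two sides of the path. In other words, the separation survives one step of $\tau$, and the paper declares this sufficient. You work instead with the \emph{generating triangles}. You show that no point of $U'$ is a lattice neighbour of a point of $U''$, so every triangle $\{p,p+e_1,p+e_2\}\subseteq U'\cup U''$ lies in one part. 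That is exactly the content of $\tau(U'\cup U'')=\tau(U')\cup\tau(U'')$.

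The two arguments buy different things. Yours is the one that actually gives the one-step identity in the definition of independence. A triangle straddling the path is not ruled out by $\tau(U')\subseteq L'$ alone, so the paper's ``it is sufficient'' tacitly relies on your observation. The paper's argument gives persistence of the separation, which is what is needed when independence is applied repeatedly over many time steps, as in the proof of Lemma~\ref{I(x)=f(x)}. You do not address that, though the statement as written does not require it.

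Two minor remarks. First, your boundary worries are unnecessary: any triangle contained in $U'\cup U''$ already lies in $S$, and the generated points may leave the strip without affecting the identity. Second, the polyline and components detour can be replaced by a short combinatorial check. The path has exactly one point $(p_c,c-p_c)$ on each diagonal $u+v=c$, with $p_{c+1}-p_c\in\{0,1\}$. This immediately forbids a horizontal or vertical neighbour pair lying strictly on opposite sides.
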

    	\begin{proof}
    Assume that the monotone path $M$ divides the strip $S = \{(u,v): a \leq u+v \leq b\}$ into two parts $L'$ and $L''$ such that $U'\subseteq L'$ and $U''\subseteq L''$.
    It is sufficient to prove that $\tau(U')\in L'$ and $\tau(U'')\in L''$.
        Inspecting Fig.~\ref{fig:automaton-evolution}, we conclude that 
        \begin{equation}\label{tauU}
    \text{if } (a,b)\in \tau (U)\ \text{then}\ U\cap \{(u, b): a-1 \leq u\leq a\}\ne \es\ \text{and}\  
    U\cap \{(a, v): b-1\leq v\leq b\}\ne \es. 
    \end{equation}
    Suppose for contradiction that $(a,b) \in \tau (U')\cap (M \cup L'')$. Since $M$ is monotone, at least one point $(a-1,b)$ or $(a,b-1)$ is also in $M \cup L''$. It follows from \eqref{tauU} that $(a,b)\notin \tau(U')$ and we come to a contradiction. The case  $(a,b) \in \tau (U'')\cap (M\cup L')$ is considered similarly.
    \end{proof} 
	
	\subsection{The logtime reduction of \boldmath$\symF$ to $\CGE$}

Given a word $(\eta_1,\dots, \eta_n; \xi_1,\dots, \xi_n)$, where $\eta$ determines a symmetric Boolean function and $\xi$ is an assignment of the variables of the function, we will construct the input data for the cone game evaluation problem (Problem \ref{prob:eval-nand}): the boundary conditions and the value $T$. As an intermediate step we define  a configuration $I(\eta,\xi)$. Its informal structure is shown in Fig.~\ref{gadget2}. The configuration is a union of horizontal strips,  vertical strips and corners. They are depicted in Fig.~\ref{gadget2} by horizontal, vertical lines and lines at an angle $45^{\circ}$ respectively.

        Bold points in the figure are reference points of the gadgets.  
    The  point $\alpha = (0,-4)$ is the reference point of a vertical strip of width 1 and height 4. The reference points $r_{ij}$ in the trapezoid are indexed by pairs  $(i,j) $, where $0\leq i\leq n$ and $0\leq j\leq n+i$. Their $(u,v)$ coordinates are
    \[
    r_{ij} = (3i-3j, 3i+3j). 
    \]
If $\xi_{i+1}=1$ then every point $r_{ij}$ is the reference point of a vertical strip of height $6$ and width $1$. Otherwise, $r_{ij}$ is the reference point of a corner of height $4$ and width~$3$.  

\begin{figure}[h!]
\centering
		\begin{tikzpicture}[scale=1.5, every node/.style={font=\small}]
			\def\s{3}
			\def\shift{0.4}
			\def\bigshift{0.8}
			
			\coordinate (A) at (0,0);
			\node[below right=6pt and 0.5pt] at (A) {};
			\coordinate (D) at ({-\s/sqrt(2)}, {\s/sqrt(2)});
			\coordinate (C) at (0, {\s*sqrt(2)});
			\coordinate (B) at ({\s/sqrt(2)}, {\s/sqrt(2)});
			
			\coordinate (Cext) at ($ (B)!1.5!(C) $);
			
			\coordinate (D_vert) at ($ (D) + (0, 5) $);
			\coordinate (Topm) at (intersection of D--D_vert and B--Cext);
			
			\coordinate (TopRight) at (Cext -| B);
			
			\draw[thin] (A) -- (B) -- (Topm);
			\draw[thin] (A) -- (D);
			\draw[thin, dotted] (D) -- (C);
			
			\foreach \t/\lbl/\N in {0/0/Top0,0.125/2/Top2,0.25/4/Top4} {
				\coordinate (\N) at ($ (B)!\t!(C) $);
				\fill[black] (\N) circle (1.2pt);
			}
			
			\fill[black] (Topm) circle (1.2pt);
			\node[left=1.2pt] at (Topm) {$m'$};
			
			\coordinate (Bvert) at ($ (B) + (0, 3.5) $);
			\draw[very thick, black] (B) -- (Bvert);
            \fill[white] (Bvert) circle (1.2pt);
            \draw[black, thick] (Bvert) circle (1.2pt);
			\node[right=1.2pt, black] at (Bvert) {$\phi$};
			
			\foreach \N in {Top0,Top2,Top4} {
				\coordinate (H) at (\N -| B);
				\draw[very thick, black] (\N) -- (H);
			}
			\coordinate (C_H) at (C -| B);
			\draw[very thick, black] (C) -- (C_H);
			\fill[black] (C) circle (1.2pt);
			
			\coordinate (P0) at (A);
			\coordinate (P1) at ($ (A)!0.07!(D) $);
			\coordinate (P2) at ($ (A)!0.15!(D) $);
			\coordinate (Pm) at (D);
			
			\coordinate (Aup) at ($ (A) + (0,\shift) $);
			\coordinate (Dup) at ($ (D) + (0,\shift) $);
			\coordinate (P1up) at ($ (P1) + (0,\shift) $);
			\coordinate (P2up) at ($ (P2) + (0,\shift) $);
			
			\coordinate (Aup2) at ($ (Aup) + ({\shift/2}, {\shift/2}) $);
			\coordinate (Dup2) at ($ (Dup) + ({\shift/2}, {\shift/2}) $);
			\coordinate (P1up2) at ($ (P1up) + ({\shift/2}, {\shift/2}) $);
			\coordinate (P2up2) at ($ (P2up) + ({\shift/2}, {\shift/2}) $);
			
			\coordinate (DupInt) at (intersection of Aup--Dup and Pm--Topm);
			\coordinate (Dup2Int) at (intersection of Aup2--Dup2 and Pm--Topm);
			
			\coordinate (AupInt) at (intersection of Aup--Dup and A--B);
			\coordinate (Aup2Int) at (intersection of Aup2--Dup2 and A--B);
			
			\draw[thin, black] (DupInt) -- (AupInt);
			\node[black, below right] at (AupInt) {$x_1 = 1$};
			\draw[thin, black] (Dup2Int) -- (Aup2Int);
			\node[black, below right] at (Aup2Int) {$x_2 = 0$};
			\node[black, below right=1.2pt] at (B) {$x_n$};
			
			\foreach \P/\lbl in {P0/0, P1/1, P2/2, Pm/m}
			{
				\fill (\P) circle (1.2pt);
				\node[below left=1.2pt] at (\P) {$\lbl$};
			}
			\foreach \P in {Aup, P1up, P2up, Dup, Aup2, P1up2, P2up2, Dup2}
			\fill[black] (\P) circle (1.2pt);
			
			\draw[very thick, black] (P0) -- (Aup);
			\draw[very thick, black] (P1) -- (P1up);
			\draw[very thick, black] (P2) -- (P2up);
			\draw[thin, black] (Pm) -- (Topm);
			\draw[very thick, black] (Pm) -- (DupInt);
			\draw[very thick, black] (Aup) -- (Aup2);
			\draw[very thick, black] (P1up) -- (P1up2);
			\draw[very thick, black] (P2up) -- (P2up2);
			\draw[very thick, black] (DupInt) -- (Dup2);
			
			\coordinate (ALPHA) at ($ (A) + (0, -0.7) $);
			\draw[very thick, black] (A) -- (ALPHA);
			\fill[black] (ALPHA) circle (1.2pt);
			\node[right=1.2pt, black] at (ALPHA) {$\alpha$};
		\end{tikzpicture} 
		\caption{The configuration $I(\eta, \xi)$.
        }\label{gadget2}
\end{figure}

We are going to reduce  $\symF$ to $\CGE$. For an input $(\eta,\xi) \in \{0,1\}^{2n}$, the condition $(\eta,\xi) \in \symF$ is equivalent to $|\xi| \in S_{\eta}$, where $S_{\eta} = \{j \mid \eta_j=1\}$ and $|\xi|$ is the Hamming weight of $\xi$. 
The point $r_{n,0}$ is the reference point of a vertical strip of height $3n+4$ and width $1$. If $j\in S_{\eta}$ then a point $r_{n,j}$ is the reference  point of a horizontal strip of height $1$ and width $3j$.

Let $T_{\text{gadget}} = 2\cdot((3n+6n+4)-(0-4)) = 18n+16$.
The point $\phi = (\phi_1, \phi_2) = (3n, 6n+4)$ is the target point of the configuration.         
Let $I(\eta,\xi)_t$ be a configuration produced from $I(\eta,\xi)$ after $t$ steps of evolution.

\begin{lemma}\label{I(x)=f(x)}
    $\phi\in I(\eta,\xi)_{T_{\text{\textup{gadget}}}/2}$ iff $(\eta,\xi) \in \symF$.
\end{lemma}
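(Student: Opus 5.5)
The plan is to view the evolution of $I(\eta,\xi)$ as signal propagation along the gadgets and to track exactly when each gadget is reached. Lemma~\ref{lem:independence} lets us treat the gadgets as evolving separately, and Lemma~\ref{lem:plus-one-conservation} gives the timing. The first step is to decompose $I(\eta,\xi)$ into pieces separated by monotone paths inside a finite strip $a\le u+v\le b$ that contains the relevant region. These pieces are the initial strip at $\alpha$, the strips or corners at the reference points $r_{ij}$, the long vertical strip at $r_{n,0}$, and the horizontal strips at $r_{n,j}$ for $j\in S_\eta$. The point is to check that neighbouring gadgets overlap only at the places where a signal should be handed over.

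To make this precise, I would introduce a \emph{front}: the minimum coordinate sum of the surviving points on each branch. By Lemma~\ref{lem:plus-one-conservation}, a gadget satisfying the ``+1'' property is consumed one diagonal per step. I would then show by induction on $i$ the following invariant. After the signal has passed row $i$ of the trapezoid (the points $r_{ij}$, $0\le j\le n+i$), the configuration restricted to the next row has a surviving, ``alive'' signal at $r_{i+1,j}$ exactly when $j$ equals the number of ones among $\xi_1,\dots,\xi_{i+1}$.

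The two cases in the induction step correspond to the two gadget types at $r_{ij}$:
\begin{itemize}
\item If $\xi_{i+1}=1$, the vertical strip of height $6$ carries the signal from $r_{ij}$ to $r_{ij}+(0,6)=r_{i+1,j}$ shifted appropriately, so the counter index is preserved in the new coordinates. Indeed $r_{i+1,j}-r_{ij}=(3,3)$ and $r_{i+1,j+1}-r_{ij}=(0,6)$, so the counter increments.
\item If $\xi_{i+1}=0$, the corner of height $4$ and width $3$ carries the signal to $r_{ij}+(3,3)=r_{i+1,j}$, so the counter is unchanged.
\end{itemize}
A signal that arrives at a gadget propagates through it. A gadget that no signal reaches just disappears by Lemma~\ref{lm:EBT} before it can create spurious points near the output. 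This is the point where the monotone bounding-triangle argument must be used carefully: a gadget that is not fed must not reach the target.

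In the last row, the alive signal sits at $r_{n,k}$ with $k=|\xi|$. The horizontal strip of width $3k$ starting there exists if and only if $\eta_k=1$, and it ends on the vertical line $u=3n$ through $\phi$ because $r_{n,k}=(3n-3k,3n+3k)$. The final step is a timing argument. The target $\phi$ lies on the vertical strip at $r_{n,0}$, of height $3n+4$. I would show that this strip alone dies (its top reaches sum $9n+4$) exactly one step before time $T_{\text{gadget}}/2=9n+8$, at which point the total front has moved from $-4$ to $9n+4$. The horizontal strip feeding into it must therefore be alive at the right moment to keep $\phi$ present at step $9n+8$; this is what requires the exact constants $6$, $4$, $3$ and the ``+1'' property. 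Conversely, if $\eta_k=0$, or if the signal ended at a different $r_{n,j}$, then $\phi$ is absent, by causality and Lemma~\ref{lem:independence}.

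I expect the main obstacle to be the junctions: where a strip meets the next gadget, or the horizontal output strip meets the column $u=3n$. At these points the independence lemma does not apply directly. I would handle them by computing $\tau$ explicitly on small local patterns using Fig.~\ref{fig:automaton-evolution}, and by verifying that arrival times match, so that a triangle is completed exactly when the incoming signal is present.
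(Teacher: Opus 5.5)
\textbf{There is a genuine gap.} The ``signal'' picture does not match the dynamics, and the step that actually separates yes-instances from no-instances is missing.

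All gadgets are present at time $0$, and every chain is eaten from its lowest diagonal at rate one. Unfed chains do not die out locally. Everything except the strip at $\alpha$ lies in $u+v\ge 0$. Moreover, for every $j\in S_\eta$ the chain ending at $r_{n,j}$ is joined through its horizontal strip to the output column. So chains that never carry your signal do reach $\phi$. For example, a chain starting in row $0$ and ending at some $r_{n,j}$ with $j\in S_\eta$ keeps $\phi$ present until roughly time $9n+4$. This has three consequences:
\begin{itemize}
\item Your claim that an unfed gadget disappears ``before it can create spurious points near the output'' is false.
\item Lemma~\ref{lem:independence} cannot make the branches evolve separately near the output, since no monotone path separates them on the output column.
\item The converse does not follow from ``causality and Lemma~\ref{lem:independence}'' as stated.
\end{itemize}
Your final timing claim is also wrong. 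The strip at $r_{n,0}$ spans diagonals $6n$ to $9n+4$, so by itself it is consumed after about $3n+5$ steps, not one step before $9n+8$. And $\phi$ must be present at time $9n+8$, not die just before it. (Your first bullet also contradicts itself: the vertical strip goes to $r_{i+1,j+1}$, not $r_{i+1,j}$; your second conclusion, that the counter increments, is the right one.)

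\textbf{What is missing is the paper's global front argument.} By Lemma~\ref{lm:EBT}, $s_{\min}(I(\eta,\xi)_t)\ge t-4$, so at $t=9n+8$ only the diagonal $u+v=9n+4$ can be occupied. The only point that can remain there is $\phi$. Hence $\phi\in I(\eta,\xi)_{9n+8}$ iff the minimum coordinate sum grows by exactly one at every step.

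All material not connected to $\alpha$ starts on diagonals $\ge 0$, so it lags by at least $4$. Therefore only the branch starting at $\alpha$ can carry the front on diagonal $t-4$. This $4$-step lag, coming from the height of the strip at $\alpha$, is exactly what distinguishes the two cases. It is what the paper's disappearance times $t_{ij}$ record: they differ by $4$ on and off the branch.

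\begin{itemize}
\item If $|\xi|\notin S_\eta$, the $\alpha$-branch ends on the diagonal $u+v=6n$. At time $6n+5$ nothing occupies diagonal $6n+1$, because the rest lies on diagonals $\ge 6n+5$, so the front jumps.
\item If $|\xi|\in S_\eta$, the branch continues through the horizontal strip at $r_{n,|\xi|}$ and up the output column with the ``$+1$'' property. Your local junction computation is the right tool for this merge. The extra material cannot destroy the front, because $\tau$ is monotone.
\end{itemize}
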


\begin{proof}
	The proof relies on tracking the minimum sum of coordinates of the evolving configurations $I(\eta, \xi)_t$. Let $s_{\min}(U)$ denote the minimum sum of coordinates $u+v$ for points $(u,v) \in U$.

	For the initial configuration $I(\eta,\xi) = I(\eta,\xi)_0$ we have $s_{\min}(I(\eta,\xi)_0) = -4$. 
Thus,  $s_{\min}(U_{t}) \ge -4+t $ due to  Lemma~\ref{lm:EBT}.   
    So, the minimum coordinate sum of $I(\eta, \xi)_{T_{\text{gadget}}/2}$ is at least $-4 + 9n + 8 = 9n+4$. 
	
    Notice that the target point $\phi = (3n, 6n+4)$ has exactly this coordinate sum: $3n + 6n + 4 = 9n+4$. Thus, $\phi \in I(\eta,\xi)_{T_{\text{gadget}}/2}$ if and only if the  configuration $I(\eta,\xi)$ itself  satisfies the ``$+1$'' property.

	By Lemma~\ref{lem:plus-one-conservation}, the gadgets (strips and corners) satisfy the ``$+1$'' property, meaning $s_{\min}(U_{t+1}) = s_{\min}(U_t) + 1$ as long as the configuration has not disappeared.   
    For each $0\leq i\leq n-1$, subconfigurations of $I(\eta,\xi)_t$ in a diagonal strip $\{(u,v) : 6i\leq u+v< 6(i+1)\}$ consist of vertical strips and corners. 
    The evolution of these gadgets in the diagonal strip proceeds independently due to Lemma~\ref{lem:independence}. The gadgets in the strip disappear at different moments of time. 
    Note that gadgets with offset points $r_{ij}$ and $r_{(i+1) j}$ overlap on the line $u+v= 6(i+1)$. Using this fact, it is easy to check by induction on $i$ that   a gadget with the offset point $r_{ij}$ disappears at moment $t_{ij}$, where 
    \[
    t_{ij} = \left\{\begin{aligned}
    &4+6(i+1), &&\text{if}\ \sum_{m=1}^{i+1} \xi_m = j, \\
    & 6(i+1), &&\text{otherwise.}
    \end{aligned}\right.
    \]
    
It means that $t_{n-1, |\xi|} = 4+6n$ and $t_{n-1, j } = 6n$ for $j\ne |\xi|$.

Assume that  $|\xi| \in S_{\eta}$. 
By construction, $I(\eta,\xi)$ contains a horizontal strip starting at $r_{n, |\xi|} = (3n-3|\xi|, 3n+3|\xi|)$ of width $3|\xi|$. The strip overlaps with the gadget with the offset point $r_{(n-1), |\xi|}$ on the line $u+v= 6n$. 
    Thus,  $s_{\min}(I(\eta,\xi)_{6n+4}) = 6n $ if $|\xi| \in S_{\eta}$. 

If $|\xi| \notin S_{\eta}$, then, in the half-plane $u+v>6n$, there are no points in $I(\eta, \xi)$ neighbouring the gadget with the offset point $r_{(n-1), |\xi|}$ .  It implies that   $s_{\min}(I(\eta,\xi)_{6n+4}) \geq  6n+4 $. 

The configuration $I(\eta,\xi)_{6n+4}$ is formed by the vertical gadget with offset point  $r_{n,0}$ and a family of separated horizontal gadgets. This configuration satisfies the ``$+1$'' property due to Lemmas~\ref{lem:plus-one-conservation} and \ref{lem:independence} and the following fact: the merge of a horizontal and a vertical strip satisfies the ``$+1$'' property. This fact is proved similarly to the proof of Lemma~\ref{lem:plus-one-conservation}.

Therefore, if 	$(\eta,\xi) \in \symF$ then  $s_{\min}(I(\eta,\xi)_{6n+4}) = 6n $ and $\phi\in I(\eta,\xi)_{T_{\text{gadget}}/2}$. Otherwise, $I(\eta,\xi)_{T_{\text{gadget}}/2}=\es$.
\end{proof}

Since $T_{\text{gadget}}$ is even, Lemma~\ref{I(x)=f(x)} provides a reduction of deciding $(\eta,\xi) \in \symF$ to solving recurrence~\eqref{tranfsormation} at the point $\phi_1 =(3n, 6n+4, 9n+8)$ with boundary conditions specified by the configuration $\{(u,v,0): (u,v)\in I(\eta, \xi)\}$. This problem is easily reduced to the language $\CGE$ (Definition~\ref{CGE-lang}). 

Let us show why this reduction is computable in $\DLOGTIME$.
Given an index $m$ of a bit in the target word $w = 1^T 0 1^{T'}0 w'$, the $\DLOGTIME$ machine must compute $w_m$. The length $n$ of  $\xi$ can be determined in $O(\log n)$ time (via binary search for the end of the input and dividing by~2). 

Note that additions and multiplication/divisions by constants can be made in logarithmic time. So, we  restrict actions of the reducing algorithm to operations of this sort. 
To avoid   multiplications of large integers, we pad the time parameter to a power of two. The machine computes $l = \lceil \log_2 n \rceil$ (by finding the most significant bit of $n$). Then it sets $T = 2^{l+5}$ and $T' = T_{\text{gadget}} = 18n+16$. Notice that $T = 32 \cdot 2^l \ge 32n > 18n+16 = T_{\text{gadget}}$. The length of the word $w$ is exactly $|w| = T + 1 +T'+1 + 2^{2k}$, where $k = \lceil \log_2(T+1) \rceil = l+6$. Since $T$ and $2^{2k}$ are powers of two, the machine can easily check if $m \le |w|$ by comparing $m$ with $2^{2l+12} + 2^{l+5} + 18n+16+ 2$, which can be done in $O(\log n)$ time.

If $m \le T$, the machine outputs $1$. If $m = T+1$, it outputs $0$. 
If $ T+1<m\leq T+1+T'$, the machine again outputs $1$. If $m = T+1+T'+1$, it outputs $0$. If $m> T+T'+2$, 
the machine considers the shifted index $m' = m - (T+T'+3)$. Since $|w'| = 2^{2k}$, the binary representation of $m'$ has at most $2k$ bits. The machine extracts the upper $k$ bits as $x'$ and the lower $k$ bits as $y'$. This extraction is a trivial operation in $\DLOGTIME$ (just reading specific bits from the address tape).

Then, the machine checks if $x' + y' \le T'$. If not, it outputs $0$ (these are padding bits). Otherwise, it computes the original coordinates $x = -x'$ and $y = -y'$. 

To map these coordinates to the gadget coordinates $(u,v)$, we align the target point $\phi$ with the apex $(0,0,0)$ and shift the gadget down to the layer $-T'$. This gives the linear transformation:
$u = -x - 6n - 8 = x' - 6n - 8$ and $v = -y - 3n - 4 = y' - 3n - 4$.
Computing $6n$ and $3n$ requires only bit shifts and additions, which can be done in $\DLOGTIME$.

Finally, the machine checks whether $(u,v) \in I(\eta,\xi)$ using the following steps:
\begin{enumerate}
    \item \textbf{Grid localization.} The initial configuration $I(\eta,\xi)$ is constructed by placing basic gadgets at reference points $r_{ij} = (3i-3j, 3i+3j)$. Since the gadgets have a bounded constant size, any point $(u,v)$ can belong to at most one gadget. We uniquely determine the indices of this candidate gadget by inverting the coordinate mapping: $i = \lfloor (u+v)/6 \rfloor$ and $j = \lfloor (v-u)/6 \rfloor$. 
    \item \textbf{Constant part of the configuration.}
 If $i>n$ then the point belongs to the constant part of $I(\eta,\xi)$, which does not depend on $\xi$. If $j=0$ then
 $(u,v)\in I(\eta,\xi)$ iff $3n-1\leq u\leq 3n $ and $3n\leq v\leq 6n+4$.
 Otherwise, set $j' =\lfloor (2v- 6n)/6 \rfloor$.
 In this case  $(u,v)\in I(\eta,\xi)$ iff $j'\in S_{\eta}$, and $0<j'\leq n $, and  $ v \bmod 3 \in\{0,1\} $.

On the following steps we assume that $0\leq i\leq n$.
    
    \item \textbf{Relative coordinates.}  To see if $(u,v)$ is actually a filled cell within the gadget at $r_{ij}$, we compute its local offset: $\Delta u = u - (3i-3j)$ and $\Delta v = v - (3i+3j)$. If these offsets fall outside the constant bounding box of our basic gadgets, the point is definitely empty.
    \item \textbf{Input query.} The exact shape of the gadget at $r_{ij}$ is dictated by $\xi$. As defined in our construction, if $i < n$, the gadget at $r_{ij}$ is either a vertical strip or a corner depending entirely on the value of the input bit $\xi_{i+1}$. 
    Thus, we need to get $\xi_{i+1}$ from the input tape. Writing the $O(\log n)$-bit index $i+1$ to the address tape to query the bit takes $O(\log n)$ time.
    \item \textbf{Local pattern matching.} Knowing the exact gadget type from the input query and the relative coordinates $(\Delta u, \Delta v)$, we simply check if this local position is ``filled'' in this specific gadget. This is a finite table lookup (a constant-size Boolean condition) evaluated in $O(1)$ time.
\end{enumerate}

All  operations of the algorithm are performed in  $O(\log n)$ time. Thus, the entire reduction is performed in $O(\log n)$ time.

It completes the proof of Theorem~\ref{sym2CGE}.
         
	\section{Open Problems and Future Work}

        Our findings open several avenues for future research. First, we established tight complexity bounds for general recurrent sequences. The goal is to extend the bounds to NAND recurrent sequences. In terms of games, it would mean \PSPACE- or \EXP-completeness of subtraction games with arbitrary winning conditions. Our techniques have no clear extension to this case. Some new ideas are required.

Second, while we established \PP-hardness of \sCGE, the exact complexity of languages \CGE{} and \sCGE{} remains open. As for \sCGE, it would be interesting to determine whether it is \Ptime-complete, or if it belongs to a smaller class such as $\TC^0$. 

A third and most important goal is to generalize the results to multidimensional subtraction games with non-negative offsets for normal and/or mis\`ere winning conditions. For this purpose, one can consider recurrences with the same simple set of offsets but with more complicated boundary conditions.

\paragraph*{Acknowledgements.} This article is an output of a research project (HSE-BR-2025-24) implemented as part of the Basic Research Program at HSE University.

	\printbibliography[heading=bibintoc]

\end{document}